\theoremstyle{thmstyleone}%
\newtheorem{theorem}{Theorem}%  meant for continuous numbers
\newtheorem{corollary}[theorem]{Corollary}% 
\newtheorem{lemma}[theorem]{Lemma}% 
\theoremstyle{thmstyletwo}%
\newtheorem{example}{Example}%
\theoremstyle{thmstylethree}%
\begin{document}

\title[On Galois duality, self-orthogonality, and dual-containment of MP codes]{On Galois duality, self-orthogonality, and dual-containment of matrix product codes}

%%=============================================================%%
%% GivenName	-> \fnm{Joergen W.}
%% Particle	-> \spfx{van der} -> surname prefix
%% FamilyName	-> \sur{Ploeg}
%% Suffix	-> \sfx{IV}
%% \author*[1,2]{\fnm{Joergen W.} \spfx{van der} \sur{Ploeg} 
%%  \sfx{IV}}\email{iauthor@gmail.com}
%%=============================================================%%

\author*[1,2]{\fnm{Ramy} \sur{Taki Eldin}}\email{ramy.farouk@eng.asu.edu.eg}

\affil[1]{\orgdiv{Faculty of Engineering}, \orgname{Ain Shams University}, \orgaddress{\city{Cairo}, \country{Egypt}}}

\affil[2]{\orgdiv{Egypt University of Informatics}, \orgname{Knowledge City}, \orgaddress{\city{New Administrative Capital}, \country{Egypt}}}

%%==================================%%
%% Sample for unstructured abstract %%
%%==================================%%

\abstract{In recent literature, matrix product (MP) codes and their duals have gained significant attention due to their application in the construction of quantum stabilizer codes. In this paper, we begin with providing a formula that characterizes the Galois dual of MP codes. Using this formula, we establish the conditions under which MP codes are self-orthogonal and dual-containing. Although similar results may exist in the literature, the novelty and superiority of our results can be identified in the following points. Previous results that characterize the duals of MP codes only apply to MP codes with an invertible square defining matrix $\mathcal{A}$. However, our characterization applies to MP code with any defining matrix, whether $\mathcal{A}$ is not square or not of full row rank. Previous studies on the conditions for self-orthogonality or dual-containment of MP codes have assumed certain structures for the product $\mathcal{A}\mathcal{A}^T$ or $\mathcal{A}\mathcal{A}^{\dagger}$, such as being diagonal, anti-diagonal, monomial, or partitioned Hermitian orthogonal. However, our conditions do not necessitate such specific structures. Previous studies investigated MP code duality in the context of Euclidean and Hermitian duals; however, we investigate MP code duality in the broader context of Galois dual, with Euclidean and Hermitian duals acting as special cases. Finally, it is worth noting that the proposed conditions for Galois self-orthogonality or dual-containment are both necessary and sufficient. To demonstrate the theoretical results, several numerical examples with best-known parameters MP codes are provided.}

\keywords{Matrix product code, Galois dual, Self-orthogonal, Dual-containing}

%%\pacs[JEL Classification]{D8, H51}

\pacs[MSC Classification]{15A24, 94B05, 12F10}

\maketitle
\sloppy
\section{Introduction}
\label{Intro}
In \cite{Blackmore2001}, matrix product (MP) codes over finite fields were introduced as a generalization to Plotkin's well-known $(u|u + v)$-construction and the ternary $(u + v + w|2u + v|u)$-construction. A codeword in an MP code $\mathcal{C}$ is the product of a matrix whose columns are codewords from some constituent codes $\mathcal{C}_1, \mathcal{C}_2, \ldots, \mathcal{C}_M$ and an $M\times N$ defining matrix $\mathcal{A}$. If the constituent codes are linear, then the MP code $\mathcal{C}$ is also linear. Furthermore, if $\mathcal{A}$ is right invertible, then the size of the MP code $\mathcal{C}$ is the product of the sizes of the constituent codes, i.e., $|\mathcal{C}|=|\mathcal{C}_1|\cdots |\mathcal{C}_M|$. In the terminology of \cite{Blackmore2001}, an MP code can be viewed as a generalized concatenated code if $\mathcal{A}$ is non-singular by columns (NSC). Several lower bounds on the minimum distance of MP codes have been proposed in the literature \cite{Blackmore2001,Hernando2009,Hernando2010}. More recently, a novel class of linear codes containing both MP and quasi-twisted codes was introduced in \cite{Eldin2024}, and investigations for the lower bounds on the code minimum distance were conducted as well.

The Euclidean and Hermitian duals of MP codes have been studied extensively in \cite{Blackmore2001,Liu2017,Cao2020,2Cao2020,Jitman,Zhang2024}. These studies have focused on the case where the defining matrix $\mathcal{A}$ is square and invertible. However, the literature lacks a description of these duals for MP codes with non-square or non-full rank defining matrices. In fact, MP codes that are Euclidean or Hermitian dual-containing have received attention in recent years due to their applications in the construction of quantum stabilizer codes \cite{Cao2020,2Cao2020,Liu2017}. Such studies have determined sufficient conditions to ensure the dual-containment for MP codes by assuming a restricted form for the defining matrix $\mathcal{A}$, such as being invertible with a monomial $\mathcal{A}\mathcal{A}^T$ (or $\mathcal{A}\mathcal{A}^{\dagger}$ for Hermitian dual) \cite{Cao2024}. Similarly, self-orthogonal MP codes have also found use in the construction of quantum stabilizer codes \cite{Zhang2024}. It looks natural to explore the conditions under which an MP code is self-orthogonal. Again, current studies in the literature aimed at investigating the self-orthogonality of MP codes under restrictive assumptions on the defining matrix $\mathcal{A}$, such as being invertible with a monomial $\mathcal{A}\mathcal{A}^T$ \cite{Cao2024}, or having full row rank with a diagonal or anti-diagonal $\mathcal{A}\mathcal{A}^{\dagger}$ \cite{JitmanConf,Jitman}. In \cite{Zhang2024}, a sufficient condition for the Hermitian self-orthogonality of MP codes was examined under the assumption that the defining matrix is of full row rank achieving the so-called partitioned Hermitian orthogonal property. In conclusion, the existing literature lacks necessary and sufficient conditions for the self-orthogonality and dual-containment of MP codes with general defining matrices. Further study is required to fully characterize the duality of MP codes with arbitrary defining matrices.

In this paper, we investigate the duality of MP codes in a generalized context. We generalize several aspects compared to the existing literature to make our results more comprehensive. First, we explore the duality of MP codes with defining matrices that are not necessarily invertible or square. Most importantly, we assume any defining matrix, which does not even have to be of full rank. Second, we make no particular assumptions regarding the structure of the product $\mathcal{A}\mathcal{A}^T$ or $\mathcal{A}\mathcal{A}^{\dagger}$, where $\mathcal{A}$ is the defining matrix. Third, the only constraint we impose on the constituent codes is that they must be linear, which is the most important class of error-correcting codes in the literature. Fourth, we investigate the most general context of duality, the Galois dual, from which the Euclidean and Hermitian duals can be derived as special cases. Finally, we aim at making the proposed conditions for the self-orthogonality and dual-containment necessary and sufficient. Overall, we provide a comprehensive investigation for the duality of MP codes without making any restrictive assumptions as those employed in previous studies.

Under the aforementioned setting, we proceed as follows. In Corollary \ref{CoroGaloisDual}, we derive a Galois dual formula for MP codes when the defining matrix has full row rank. This result generalizes \cite{Blackmore2001,Liu2017,Cao2020,2Cao2020,Jitman,Zhang2024}. Whereas Theorem \ref{dualMPCase2} and its subsequent discussion address the situation when the defining matrix is not of full row rank. To the best of our knowledge, this is the first study in the literature to examine MP codes with a non-full rank defining matrix. This result is then employed in Theorem \ref{GaloisSelforthMP_Total} to demonstrate the necessary and sufficient conditions for an MP code to be Galois self-orthogonal. We do not make any particular assumption on the structure of the defining matrix, the product $\mathcal{A}\mathcal{A}^T$, or $\mathcal{A}\mathcal{A}^{\dagger}$. We prove these conditions by considering both the full row rank defining matrix case in Theorem \ref{GaloisSelforthMP} and the non-full row rank case in Subsection \ref{SOwithsmallrank}. Hence we generalize \cite{JitmanConf,Jitman,Zhang2024,Cao2024}. On the other hand, Theorem \ref{GalDualContMP} establishes the necessary and sufficient conditions for an MP code with a full row rank $\mathcal{A}$ to be Galois dual-containing, generalizing the results in \cite{Cao2020,2Cao2020,Liu2017,Cao2024}. However, the case in which $\mathcal{A}$ has non-full row rank is discussed in Theorem \ref{GDualContGen}. To illustrate the application of our theoretical results, we include many numerical examples. We exploit the proposed conditions to run a computer search to present self-orthogonal and dual-containing MP codes with the best-known parameters according to the database \cite{Grassl:codetables}.

The remainder of this paper is organized as follows: Section \ref{Prelim} introduces the fundamental concepts of MP codes and their duals. In Section \ref{EandGdualMP}, we discuss the duals of MP codes within our proposed general settings. Sections \ref{selforthgof MP} and \ref{dualcontnof MP} investigate the conditions for MP codes self-orthogonality and dual-containment, respectively. Finally, Section \ref{concl} concludes our contribution.

\section{Preliminaries}
\label{Prelim}
Let $\mathbb{F}_q$ denote a finite field with $q$ elements. A linear code of length $n$ over $\mathbb{F}_q$ is a subspace of $\mathbb{F}_q^n$ whose dimension is the code dimension, denoted by $k$. Elements of the code are its codewords. A linear code of length $n$ and dimension $k$ is said to have the parameters $[n,k,d]$, where $d$ is the code minimum distance. We indicate that $\mathcal{A}$ is a matrix of size $M\times N$ over $\mathbb{F}_q$ by $\mathcal{A}\in\mathbb{F}_q^{M\times N}$. Consider a collection of $M$ linear codes $\mathcal{C}_1, \mathcal{C}_2, \ldots,\mathcal{C}_M$ over $\mathbb{F}_q$ of length $n$. An MP code $\mathcal{C}$ with defining matrix $\mathcal{A}\in\mathbb{F}_q^{M\times N}$ and constituent codes $\mathcal{C}_1, \mathcal{C}_2, \ldots,\mathcal{C}_M$ is a linear code of length $nN$ whose codewords are specified as follows. For each choice of constituent codewords $\mathbf{c}_1\in\mathcal{C}_1, \mathbf{c}_2\in\mathcal{C}_2, \ldots,\mathbf{c}_M\in\mathcal{C}_M$, there exists a corresponding codeword $\mathbf{c}\in\mathcal{C}$ obtained by vectorizing the matrix resulting from the product $$\left[\mathbf{c}_1 \ \mathbf{c}_2 \ \cdots \ \mathbf{c}_M\right]\mathcal{A},$$ where $\left[\mathbf{c}_1 \ \mathbf{c}_2 \ \cdots \ \mathbf{c}_M\right]$ is the matrix with the constituent codewords as columns. The MP code with defining matrix $\mathcal{A}$ and constituent codes $\mathcal{C}_1, \mathcal{C}_2, \ldots,\mathcal{C}_M$ is denoted by $$\mathcal{C}=\left[\mathcal{C}_1 \ \mathcal{C}_2 \ \cdots \ \mathcal{C}_M\right]\mathcal{A}.$$
In the literature, an MP codeword can be expressed in the matrix form $\left[\mathbf{c}_1 \ \mathbf{c}_2 \ \cdots \ \mathbf{c}_M\right]\mathcal{A}$ or the vectorization of this matrix form.

An MP code, as a linear code, has a generator matrix. Let $a_{i,j}$ be the $(i,j)^{\text{th}}$ entry of $\mathcal{A}$, written $\mathcal{A}=[a_{i,j}]$, and let $\mathbf{G}_1, \mathbf{G}_2, \ldots, \mathbf{G}_M$ be generator matrices of the constituent codes. According to \cite{Blackmore2001}, a generator matrix for the MP code $\mathcal{C}=\left[\mathcal{C}_1 \ \mathcal{C}_2 \ \cdots \ \mathcal{C}_M\right]\mathcal{A}$ is given by 
\begin{equation*}
\mathbf{G}=\begin{bmatrix}
 \mathbf{G}_1 a_{1,1} & \mathbf{G}_1 a_{1,2} & \cdots & \mathbf{G}_1 a_{1,N}\\
 \mathbf{G}_2 a_{2,1} & \mathbf{G}_2 a_{2,2} & \cdots & \mathbf{G}_2 a_{2,N}\\
 \vdots & \vdots & \ddots & \vdots\\
 \mathbf{G}_M a_{M,1} & \mathbf{G}_M a_{M,2} & \cdots & \mathbf{G}_M a_{M,N}
\end{bmatrix}.
\end{equation*}
From \cite{Eldin2024}, this may be written compactly as 
\begin{equation}
\label{MP_Gen}
\mathbf{G}=\text{diag}\left[\mathbf{G}_1 \ \cdots \ \mathbf{G}_M\right] ( \mathcal{A}\otimes \mathcal{I}_n),
\end{equation}
where $\otimes$ is the Kronecker product, $\mathcal{I}_n$ is the identity matrix of size $n$, and $\text{diag}\left[\mathbf{G}_1 \ \cdots \ \mathbf{G}_M\right]$ denotes the block diagonal matrix with block entries $\mathbf{G}_1, \mathbf{G}_2, \ldots, \mathbf{G}_M$. In accordance with \eqref{MP_Gen}, if $\mathcal{A}$ is the identity matrix, the MP code is the direct sum of the constituent codes. The following notational conventions will be used consistently throughout this paper. The zero code is denoted by $\mathcal{O}$ and it is the code of length $n$ and zero dimension, whereas the whole space code is denoted by $\mathcal{F}$ and it is the code of length $n$ and dimension $n$. The zero matrix is denoted by $\mathbf{0}$ and it is the matrix with entirely zero entries. For any matrix $\mathcal{A}$, the transpose of $\mathcal{A}$ is denoted by $\mathcal{A}^T$. We frequently write $\mathcal{A}\{i_1,i_2,\ldots,i_\tau\}$ to denote the submatrix of $\mathcal{A}$ formed by the rows $i_1, i_2, \ldots, i_\tau$. A matrix $\mathcal{A}\in\mathbb{F}_q^{M\times N}$ is called non-singular by columns (NSC) if for every $1\le i\le M$, each $i \times i$ submatrix of $\mathcal{A}\{1,2,\ldots,i\}$ is non-singular.

There have been several attempts to determine lower bounds on the minimum distance $d$ of an MP code. The most well-known bound is given in \cite[Theorem 3.7]{Blackmore2001} for NSC defining matrix. Specifically, if $\mathcal{C}=\left[\mathcal{C}_1 \ \mathcal{C}_2 \ \cdots \ \mathcal{C}_M\right]\mathcal{A}$, where $\mathcal{A}\in\mathbb{F}_q^{M\times N}$ is NSC and $\mathcal{C}_i$ is $[n,k_i,d_i]$ for $1\le i\le M$, then 
$$d\ge \mathrm{min}\left\{N d_1, (N-1)d_2, \ldots, (N-M+1)d_M\right\}.$$ 
Other lower bounds on $d$ exist that do not require an NSC defining matrix. For instance, in \cite[Lemma 3]{Cao2020}, if $\mathcal{A}$ is of full row rank, then 
$$d\ge \mathrm{min}\left\{d_1 D_1(\mathcal{A}), d_2 D_2(\mathcal{A}), \ldots, d_M D_M(\mathcal{A})\right\},$$
where, for $1\le i\le M$, $ D_i(\mathcal{A})$ is the minimum distance of the linear code over $\mathbb{F}_q$ of length $N$ generated by $\mathcal{A}\{1,2,\ldots,i\}$.

Assume $q=p^e$ for a prime $p$ and a positive integer $e$. Denote the Frobenius automorphism of $\mathbb{F}_q$ by $\sigma$, that is, $\sigma:\alpha \mapsto \alpha^p$ for $\alpha\in\mathbb{F}_q$. We naturally extend $\sigma$ to act on vectors of $\mathbb{F}_q^n$ and matrices over $\mathbb{F}_q$ by applying it componentwise. For instance, if $\mathcal{A}=[a_{i,j}]\in\mathbb{F}_q^{M\times N}$, then $\sigma\left(\mathcal{A}\right)=\left[\sigma\left(a_{i,j}\right)\right]\in\mathbb{F}_q^{M\times N}$. Similarly, if $\mathcal{C}$ is a linear code over $\mathbb{F}_q$, we define $$\sigma\left(\mathcal{C}\right)=\left\{\sigma\left(\mathbf{c}\right) \ \forall \ \mathbf{c}\in\mathcal{C} \right\}.$$ 
The linearity of $\sigma$ implies that $\sigma\left(\mathcal{C}\right)$ is linear and generated by $\sigma\left(\mathbf{G}\right)$ for any generator matrix $\mathbf{G}$ of $\mathcal{C}$. The Euclidean and Hermitian inner products on $\mathbb{F}_q^n$ can be generalized to the Galois inner product. For some $0\le \ell < e$, the $\ell$-Galois inner product of $\mathbf{a}=\left(a_1,\ldots,a_n\right)$ and $\mathbf{b}=\left(b_1,\ldots,b_n\right)\in\mathbb{F}_q^n$ is defined as 
$$\langle \mathbf{a},\mathbf{b}\rangle_\ell=\sum_{i=1}^n a_i \sigma^\ell\left(b_i\right)=\sum_{i=1}^n a_i  b_i^{p^{\ell}}.$$
Obviously, the Euclidean inner product is the $0$-Galois inner product. However, for even $e$, the Hermitian inner product is the $\frac{e}{2}$-Galois inner product. Thus, the Galois dual can be defined for linear codes just as the Euclidean and Hermitian duals were defined. Specifically, let $\mathcal{C}$ be a linear code over $\mathbb{F}_q$ and let $\ell$ be a non-negative integer less than $e$. The $\ell$-Galois dual of $\mathcal{C}$, denoted by $\mathcal{C}^{\perp_\ell}$, is defined as 
$$\mathcal{C}^{\perp_\ell}=\left\{\mathbf{a}\in\mathbb{F}_q^n \text{ such that } \langle \mathbf{c},\mathbf{a}\rangle_\ell=0 \text{ for every } \mathbf{c}\in\mathcal{C}\right\}.$$
Again, the $\ell$-Galois dual $\mathcal{C}^{\perp_\ell}$ generalizes both Euclidean and Hermitian duals. Specifically, the Euclidean dual $\mathcal{C}^\perp$ is the $0$-Galois dual, while the Hermitian dual is the $\frac{e}{2}$-Galois dual when $e$ is even. Several interesting properties of the Galois dual of linear codes are discussed in \cite{TakiEldin2023}. The properties that are most important to this study are:
\begin{enumerate}
\item $\langle \mathbf{c},\mathbf{a}\rangle_\ell=\langle \mathbf{c},\sigma^\ell\left(\mathbf{a}\right)\rangle_0$.
\item $\mathcal{C}^{\perp_\ell}=\sigma^{e-\ell}\left(\mathcal{C}^\perp\right)=\left(\sigma^{e-\ell}\mathcal{C}\right)^\perp$.
\item $\left(\mathcal{C}^{\perp_\ell}\right)^{\perp_{e-\ell}}=\left(\mathcal{C}^{\perp_{e-\ell}}\right)^{\perp_{\ell}}=\mathcal{C}$.
\end{enumerate}

\section{Galois dual of MP cods}
\label{EandGdualMP}
The main aim of this section is to obtain a formula for the Galois dual of MP codes. To do this, we have to consider two separate cases: when the code defining matrix has full row rank and when it does not. Hereinafter, we use $\mathcal{A}\in\mathbb{F}_q^{M\times N}$ as the defining matrix of the code.

\subsection{$\mathcal{A}$ of rank $M$}
\label{EandGdualMP_S1}
Throughout this subsection we assume that $\mathcal{C}$ is an MP code with a defining matrix $\mathcal{A}\in\mathbb{F}_q^{M\times N}$ of rank $M$, that is, $\mathcal{A}$ has full row rank. However, the other case will be considered in Section \ref{MlargerN}. We begin by characterizing the Euclidean dual $\mathcal{C}^\perp$, then proceed to characterize the $\ell$-Galois dual $\mathcal{C}^{\perp_\ell}$. Recall that $\mathcal{F}$ denotes the whole space code of length $n$ over $\mathbb{F}_q$, whereas the zero code is denoted by $\mathcal{O}$.

\begin{theorem}
\label{dualMP}
Assume that $\mathcal{A}\in \mathbb{F}_q^{M\times N}$ is of rank $M$ and let $\mathcal{C}_i$, $i=1,2,\ldots, M$, be linear codes over $\mathbb{F}_q$ of length $n$. The Euclidean dual of $\mathcal{C}=\left[\mathcal{C}_1 \ \mathcal{C}_2 \ \cdots \ \mathcal{C}_M\right]\mathcal{A}$ is the MP code
\begin{equation*}
\mathcal{C}^\perp= [\mathcal{C}^\perp_1 \ \mathcal{C}^\perp_2 \ \cdots \ \mathcal{C}^\perp_M \ \overbrace{\mathcal{F} \ \cdots \ \mathcal{F}}^{\text{$N-M$ times}} ]\left(\mathcal{B}^{-1}\right)^T,
\end{equation*}
where $\mathcal{B}$ is any $N\times N$ invertible matrix over $\mathbb{F}_q$ such that $\mathcal{B}\{1,\ldots,M\}=\mathcal{A}$.
\end{theorem}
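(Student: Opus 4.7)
The plan is to combine a short pairing argument with a dimension count. Before either, I would verify existence of such a $\mathcal{B}$: since $\mathcal{A}$ has rank $M$, its rows extend to a basis of $\mathbb{F}_q^N$, and any such extension gives an invertible $\mathcal{B}\in\mathbb{F}_q^{N\times N}$ with $\mathcal{B}\{1,\ldots,M\}=\mathcal{A}$.

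Next I would do a dimension count. Applying the generator-matrix formula \eqref{MP_Gen} and using that $\mathcal{A}\otimes\mathcal{I}_n$ has full row rank (because $\mathcal{A}$ does), one gets $\dim\mathcal{C}=k_1+\cdots+k_M$. The candidate right-hand side has invertible defining matrix $(\mathcal{B}^{-1})^T$ and constituent codes of dimensions $n-k_i$ (for $i\le M$) and $n$ (for the $N-M$ copies of $\mathcal{F}$), hence dimension $\sum_{i=1}^M(n-k_i)+(N-M)n = nN-\sum_i k_i$. This matches $\dim\mathcal{C}^\perp$, so it suffices to establish one inclusion.

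For that inclusion, the algebraic key is that since $\mathcal{A}$ consists of the top $M$ rows of $\mathcal{B}$, one has $\mathcal{A}\mathcal{B}^{-1} = [\mathcal{I}_M\ \mathbf{0}_{M\times(N-M)}]$. Writing a typical codeword of $\mathcal{C}$ in matrix form as $\mathbf{X}=[\mathbf{c}_1\cdots\mathbf{c}_M]\mathcal{A}$ with $\mathbf{c}_i\in\mathcal{C}_i$, and a typical codeword of the candidate dual as $\mathbf{Y}=[\mathbf{d}_1\cdots\mathbf{d}_M\ \mathbf{f}_1\cdots\mathbf{f}_{N-M}](\mathcal{B}^{-1})^T$ with $\mathbf{d}_i\in\mathcal{C}_i^\perp$ and $\mathbf{f}_j\in\mathcal{F}$, the Euclidean inner product of their vectorizations equals $\mathrm{tr}(\mathbf{X}^T\mathbf{Y})$. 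Cyclic invariance of the trace together with the identity above reduce this quickly to $\sum_{i=1}^M \mathbf{c}_i^T\mathbf{d}_i=\sum_{i=1}^M\langle\mathbf{c}_i,\mathbf{d}_i\rangle_0$, which vanishes because $\mathbf{d}_i\in\mathcal{C}_i^\perp$; the contributions from the $\mathcal{F}$-slots drop out automatically because the last $N-M$ rows of $(\mathcal{A}\mathcal{B}^{-1})^T$ are zero.

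I expect no deep obstacle; the most delicate point is just bookkeeping between the two natural representations (vectorized codewords versus $n\times N$ matrices) so that the inner product, Kronecker product, and trace are manipulated consistently. Once the identity $\mathcal{A}\mathcal{B}^{-1}=[\mathcal{I}_M\ \mathbf{0}]$ is in hand, the orthogonality computation collapses in one line. As a bonus, the independence of the right-hand side from the particular completion $\mathcal{B}$ follows a posteriori from the uniqueness of $\mathcal{C}^\perp$, so no separate verification of that fact is needed.
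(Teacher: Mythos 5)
Your proof is correct, but it takes a genuinely different route from the paper's. The paper proves the theorem in two lines: it pads the constituent list with $N-M$ copies of the zero code $\mathcal{O}$ to rewrite $\mathcal{C}=\left[\mathcal{C}_1 \cdots \mathcal{C}_M \ \mathcal{O} \cdots \mathcal{O}\right]\mathcal{B}$, and then invokes the known description of the dual of an MP code with a square invertible defining matrix (Blackmore--Norton, Proposition~6.2), using $\mathcal{O}^\perp=\mathcal{F}$. In contrast, you re-derive that underlying square-invertible fact from scratch: a dimension count ($\dim\mathcal{C}=\sum k_i$ because $\mathcal{A}\otimes\mathcal{I}_n$ has full row rank, and the candidate has dimension $nN-\sum k_i$ because $(\mathcal{B}^{-1})^T$ is invertible), followed by a one-inclusion orthogonality check via $\langle\mathrm{vec}(\mathbf{X}),\mathrm{vec}(\mathbf{Y})\rangle=\mathrm{tr}(\mathbf{X}^T\mathbf{Y})$, cyclicity of the trace, and the key identity $\mathcal{A}\mathcal{B}^{-1}=[\mathcal{I}_M\ \mathbf{0}]$, which collapses the trace to $\sum_{i=1}^M\langle\mathbf{c}_i,\mathbf{d}_i\rangle_0=0$. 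Both approaches rely on the same padding idea (you use it implicitly in the candidate code; the paper makes it explicit), but yours is self-contained and elementary, whereas the paper's is shorter by outsourcing the main computation to a cited result. Your closing observation that well-definedness (independence of the completion $\mathcal{B}$) follows a posteriori is a nice touch the paper doesn't bother to mention.
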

\begin{proof}
One may rewrite $\mathcal{C}$ as an MP code with defining matrix $\mathcal{B}$, namely,
\begin{equation*}
\mathcal{C}= \left[\mathcal{C}_1 \ \mathcal{C}_2 \ \cdots \ \mathcal{C}_M\right]\mathcal{A}= \left[\mathcal{C}_1 \ \mathcal{C}_2 \ \cdots \ \mathcal{C}_M \ \mathcal{O} \ \cdots \ \mathcal{O}\right]\mathcal{B}.
\end{equation*}
Since $\mathcal{B}$ is square invertible and $\mathcal{O}^\perp=\mathcal{F}$, the result follows from \cite[Proposition 6.2]{Blackmore2001}.
\end{proof}

\begin{example}
Consider the MP code $\mathcal{C}=\left[\mathcal{C}_1 \ \mathcal{C}_2 \ \mathcal{C}_3\right]\mathcal{A}$ over $\mathbb{F}_5$, where $\mathcal{C}_1$, $\mathcal{C}_2$, and $\mathcal{C}_3$ are the linear codes over $\mathbb{F}_5$ of length $5$ generated by 
\begin{equation*}
\mathbf{G}_1=\begin{bmatrix}
 1 & 0 & 2 & 4 & 0\\
 0 & 1 & 1 & 3 & 2
\end{bmatrix}\quad,
\quad
\mathbf{G}_2=\begin{bmatrix}
 1 & 0 & 4 & 2 & 4\\
 0 & 1 & 2 & 4 & 4
\end{bmatrix}\quad\text{, and}
\quad
\mathbf{G}_3=\begin{bmatrix}
 1 & 4 & 4 & 4 & 1
\end{bmatrix},
\end{equation*}
respectively, and $\mathcal{A}$ is the NSC matrix
\begin{equation*}
\mathcal{A}=\begin{bmatrix}
 4 & 1 & 1 & 3\\
 3 & 3 & 1 & 2\\
 1 & 4 & 3 & 4
\end{bmatrix}.
\end{equation*}
One possible choice for $\mathcal{B}\in\mathbb{F}_5^{4\times 4}$ that preserves the first three rows of $\mathcal{A}$ is
\begin{equation*}
\mathcal{B}=\begin{bmatrix}
 4 & 1 & 1 & 3\\
 3 & 3 & 1 & 2\\
 1 & 4 & 3 & 4\\
 1 & 0 & 0 & 0
\end{bmatrix}.
\end{equation*}
By Theorem \ref{dualMP}, 
$$\mathcal{C}^\perp= [\mathcal{C}^\perp_1 \ \mathcal{C}^\perp_2 \  \mathcal{C}^\perp_3 \ \mathcal{F}]\left(\mathcal{B}^{-1}\right)^T.$$
In fact, $\mathcal{C}$ has the parameters $[20,5,12]$ over $\mathbb{F}_5$, while $\mathcal{C}^\perp$ has the best-known parameters $[20,15,4]$ according to the database \cite{Grassl:codetables}.
\end{example}

Theorem \ref{dualMP} generalizes the previous results \cite{Blackmore2001, Cao2020, 2Cao2020} by not requiring $\mathcal{A}$ to be square invertible. Despite its simplicity, Theorem \ref{dualMP} is the first in the literature to characterize the dual of an MP code with any full row rank defining matrix. The Galois dual of MP codes may now be easily concluded from the Euclidean dual.

\begin{corollary}
\label{CoroGaloisDual}
Assume that $\mathcal{A}\in \mathbb{F}_q^{M\times N}$ is of rank $M$, where $q=p^e$, and let $\mathcal{B}$ be any $N\times N$ invertible matrix over $\mathbb{F}_q$ such that $\mathcal{B}\{1,\ldots,M\}=\mathcal{A}$. Let $\mathcal{C}_i$, $i=1,2,\ldots,M$, be linear codes over $\mathbb{F}_q$ of length $n$. For any $0\le \ell < e$, the $\ell$-Galois dual of $\mathcal{C}=\left[\mathcal{C}_1 \ \mathcal{C}_2 \ \cdots \ \mathcal{C}_M\right]\mathcal{A}$ is the MP code
\begin{equation*}
\mathcal{C}^{\perp_\ell}= [\mathcal{C}^{\perp_\ell}_1 \ \mathcal{C}^{\perp_\ell}_2 \ \cdots \ \mathcal{C}^{\perp_\ell}_M \ \overbrace{\mathcal{F} \ \cdots \ \mathcal{F}}^{\text{$N-M$ times}} ]\left(\sigma^{e-\ell}\left(\mathcal{B}\right)^{-1}\right)^T.
\end{equation*}
\end{corollary}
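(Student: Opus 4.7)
The plan is to reduce the Galois case to the Euclidean case already handled in Theorem \ref{dualMP}. The key tool is property (2) of the Galois dual listed at the end of Section \ref{Prelim}, namely $\mathcal{C}^{\perp_\ell}=\sigma^{e-\ell}\!\left(\mathcal{C}^{\perp}\right)$. So I would start by writing
\begin{equation*}
\mathcal{C}^{\perp_\ell}=\sigma^{e-\ell}\!\left(\mathcal{C}^{\perp}\right)=\sigma^{e-\ell}\!\left([\mathcal{C}^{\perp}_1 \ \cdots \ \mathcal{C}^{\perp}_M \ \mathcal{F} \ \cdots \ \mathcal{F}]\bigl(\mathcal{B}^{-1}\bigr)^{T}\right),
\end{equation*}
where the inner equality comes from a direct application of Theorem \ref{dualMP}, legitimate because $\mathcal{A}$ has rank $M$ and $\mathcal{B}$ is a square invertible extension of $\mathcal{A}$.

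Next I would verify that $\sigma^{e-\ell}$ distributes over the MP construction. Since $\sigma$ is a field automorphism applied componentwise, for any linear codes $\mathcal{D}_1,\ldots,\mathcal{D}_N$ and any $\mathcal{A}'\in\mathbb{F}_q^{N\times N}$,
\begin{equation*}
\sigma^{e-\ell}\!\left([\mathcal{D}_1 \ \cdots \ \mathcal{D}_N]\mathcal{A}'\right)=[\sigma^{e-\ell}(\mathcal{D}_1) \ \cdots \ \sigma^{e-\ell}(\mathcal{D}_N)]\,\sigma^{e-\ell}(\mathcal{A}'),
\end{equation*}
because a typical codeword $[\mathbf{d}_1\ \cdots\ \mathbf{d}_N]\mathcal{A}'$ is mapped entrywise to $[\sigma^{e-\ell}(\mathbf{d}_1)\ \cdots\ \sigma^{e-\ell}(\mathbf{d}_N)]\sigma^{e-\ell}(\mathcal{A}')$. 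I would record this as a short observation (or cite it from earlier in the paper if already available).

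Applying this distributivity to our expression, three simplifications finish the argument. First, property (2) again gives $\sigma^{e-\ell}(\mathcal{C}^{\perp}_i)=\mathcal{C}^{\perp_\ell}_i$ for each $i$. Second, since $\sigma$ is a bijection of $\mathbb{F}_q^n$, we have $\sigma^{e-\ell}(\mathcal{F})=\mathcal{F}$. Third, $\sigma$ commutes with transposition and matrix inversion (the former trivially, the latter because $\sigma$ is a ring homomorphism on matrices, so $\sigma^{e-\ell}(\mathcal{B}^{-1})=\sigma^{e-\ell}(\mathcal{B})^{-1}$), which yields $\sigma^{e-\ell}\!\bigl((\mathcal{B}^{-1})^{T}\bigr)=\bigl(\sigma^{e-\ell}(\mathcal{B})^{-1}\bigr)^{T}$. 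Combining these three facts produces exactly the claimed formula.

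No step is really an obstacle; the only point that requires a moment of care is making sure $\sigma$ commutes with inversion and transposition of the defining matrix, and that it preserves $\mathcal{F}$, so that the outer $\sigma^{e-\ell}$ can be pushed cleanly through the MP construction. Everything else is a direct substitution. Consequently the corollary is an immediate consequence of Theorem \ref{dualMP} together with the three listed properties of the Galois dual.
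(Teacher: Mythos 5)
Your proposal is correct and follows essentially the same route as the paper: reduce to the Euclidean case via Theorem \ref{dualMP}, then push $\sigma^{e-\ell}$ through the MP construction using property (2) of the Galois dual. You merely make explicit the small verifications (that $\sigma$ commutes with transposition and matrix inversion, fixes $\mathcal{F}$, and distributes over the MP construction) that the paper leaves implicit.
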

\begin{proof}
Using the properties $\mathcal{C}^{\perp_\ell}=\sigma^{e-\ell}\left(\mathcal{C}^\perp\right)$ and $\mathcal{C}_i^{\perp_\ell}=\sigma^{e-\ell}\left(\mathcal{C}_i^\perp\right)$, we conclude from Theorem \ref{dualMP} that
\begin{equation*}
\begin{split}
\mathcal{C}^{\perp_\ell}&=\sigma^{e-\ell}\left([\mathcal{C}^\perp_1 \ \mathcal{C}^\perp_2 \ \cdots \ \mathcal{C}^\perp_M \ \mathcal{F} \ \cdots \ \mathcal{F} ]\left(\mathcal{B}^{-1}\right)^T \right)\\
&=\left[\sigma^{e-\ell}\left(\mathcal{C}^\perp_1\right) \ \sigma^{e-\ell}\left(\mathcal{C}^\perp_2\right) \ \cdots \ \sigma^{e-\ell}\left(\mathcal{C}^\perp_M\right) \ \mathcal{F} \ \cdots \ \mathcal{F} \right]\left(\sigma^{e-\ell}\left(\mathcal{B}\right)^{-1}\right)^T \\
&=\left[\mathcal{C}^{\perp_\ell}_1 \ \mathcal{C}^{\perp_\ell}_2 \ \cdots \ \mathcal{C}^{\perp_\ell}_M \ \mathcal{F} \ \cdots \ \mathcal{F} \right]\left(\sigma^{e-\ell}\left(\mathcal{B}\right)^{-1}\right)^T .
\end{split}
\end{equation*}
\end{proof}

\begin{example}
Let $\theta\in\mathbb{F}_8$ be such that $\theta^3+\theta+1=0$ and consider the MP code $\mathcal{C}=\left[\mathcal{C}_1 \ \mathcal{C}_2 \right]\mathcal{A}$ over $\mathbb{F}_8$ with parameters $[50,9,20]$, where $\mathcal{C}_1$ and $\mathcal{C}_2$ are the linear codes over $\mathbb{F}_8$ of length $10$ generated by 
\begin{equation*}
\mathbf{G}_1=\begin{bmatrix}
   1 & 0 & 0 & 0 & 0 & \theta & \theta & \theta^4 & \theta^2 & 0\\
   0 & 1 & 0 & 0 & 0 & 1 & \theta & \theta^5 & 0 & \theta^6\\
   0 & 0 & 1 & 0 & 0 & 1 & \theta & 0 & 1 & 1\\
   0 & 0 & 0 & 1 & 0 & 1 & \theta^6 & 0 & \theta & \theta^6\\
   0 & 0 & 0 & 0 & 1 & \theta^6 & \theta^5 & 1 & \theta^6 & \theta^4
\end{bmatrix}\quad \text{and}
\quad
\mathbf{G}_2=\begin{bmatrix}
   1 & 0 & 0 & 0 & \theta^2 & 1 & \theta & \theta^4 & 1 & \theta^5\\
   0 & 1 & 0 & 0 & \theta^2 & \theta^2 & 1 & \theta & \theta^6 & 1\\
   0 & 0 & 1 & 0 & 1 & \theta & 0 & 1 & \theta & \theta^2\\
   0 & 0 & 0 & 1 & \theta^4 & \theta^3 & 1 & \theta^2 & 1 & \theta^5
\end{bmatrix},
\end{equation*}
respectively, and 
\begin{equation*}
\mathcal{A}=\begin{bmatrix}
   0 & 1 & \theta^4 & 1 & \theta\\
 \theta^6 & \theta^2 & \theta^3 & \theta^5 & 0
\end{bmatrix}.
\end{equation*}
One possible choice for an invertible $\mathcal{B}\in\mathbb{F}_8^{5\times 5}$ that preserves the first two rows of $\mathcal{A}$ is
\begin{equation*}
\mathcal{B}=\begin{bmatrix}
   0 & 1 & \theta^4 & 1 & \theta\\
 \theta^6 & \theta^2 & \theta^3 & \theta^5 & 0\\
 1 & 0 & 0 & 0 & 0\\
 0 & 1 & 0 & 0 & 0\\
 0 & 0 & 1 & 0 & 0
\end{bmatrix}.
\end{equation*}
The $2$-Galois dual of $\mathcal{C}$ is obtained by Corollary \ref{CoroGaloisDual} as
\begin{equation*}
\mathcal{C}^{\perp_2}=[\mathcal{C}^{\perp_2}_1 \ \mathcal{C}^{\perp_2}_2 \ \mathcal{F} \ \mathcal{F} \ \mathcal{F}]\left(\sigma\left(\mathcal{B}\right)^{-1}\right)^T=[\mathcal{C}^{\perp_2}_1 \ \mathcal{C}^{\perp_2}_2 \ \mathcal{F} \ \mathcal{F} \ \mathcal{F}] \begin{bmatrix}
   0 &0&  0&  0&\theta^5\\
   0&  0&  0&\theta^4&\theta^2\\
   1&  0&  0&\theta^2&  1\\
   0&  1&  0&\theta&\theta\\
   0&  0&  1&\theta^3&\theta^5
\end{bmatrix}.
\end{equation*}
In fact, $\mathcal{C}^{\perp_2}$ has the parameters $[50,41,3]$ over $\mathbb{F}_8$. In addition, $\mathcal{C}^{\perp_2}_1=\sigma\left(\mathcal{C}^\perp_1 \right)$ and $\mathcal{C}^{\perp_2}_2=\sigma\left(\mathcal{C}^\perp_2 \right)$ are generated by
\begin{equation*}
\sigma(\mathbf{H}_1)=\begin{bmatrix}
1&0&0&0&0&\theta&1&\theta&\theta^4&0\\
0&1&0&0&0&\theta^5&0&1&\theta^6&\theta\\
0&0&1&0&0&\theta^6&\theta^6&\theta^5&\theta^2&\theta\\
0&0&0&1&0&\theta^2&\theta^2&1&\theta^4&\theta^2\\
0&0&0&0&1&\theta^4&\theta&\theta^4&\theta^3&\theta^4
\end{bmatrix} \quad \text{ and }\quad
\sigma(\mathbf{H}_2)=\begin{bmatrix}
1&0&0&0&0&0&\theta^4&\theta^6&\theta^5&\theta^5\\
0&1&0&0&0&0&\theta&\theta^5&0&\theta\\
0&0&1&0&0&0&\theta^4&\theta&\theta^3&\theta^5\\
0&0&0&1&0&0&\theta^4&\theta^6&1&\theta^3\\
0&0&0&0&1&0&\theta^2&\theta^2&\theta^6&\theta\\
0&0&0&0&0&1&\theta^5&\theta^6&\theta^6&\theta^6
\end{bmatrix},
\end{equation*}
where $\mathbf{H}_1$ and $\mathbf{H}_2$ are parity-check matrices of $\mathcal{C}_1$ and $\mathcal{C}_2$, respectively.
\end{example}

\subsection{$\mathcal{A}$ of rank $<M$}
\label{MlargerN}
In this subsection, we investigate the Galois dual of an MP code with a defining matrix $\mathcal{A}\in\mathbb{F}_q^{M\times N}$ of rank less than $M$, which means $\mathcal{A}$ has non-full row rank. As demonstrated in Corollary \ref{CoroGaloisDual}, the Galois dual of an MP code with full row rank $\mathcal{A}$ is MP. On the contrary, we prove here that such a dual is not necessary MP when $\mathcal{A}$ has rank less than $M$. This fact was mentioned in \cite{Jitman} on the Hermitian dual of MP codes, but no explicit evidence was provided. The following result is a more in-depth investigation that provides a proof for this fact. Recall that, for $\tau$ distinct integers $1\le i_1, i_2, \ldots, i_\tau \le M$, the $\tau\times N$ submatrix of $\mathcal{A}$ formed of the rows $i_1, i_2, \ldots, i_\tau$ is denoted by $\mathcal{A}\{i_1, i_2, \ldots, i_\tau\}$.

\begin{theorem}
\label{dualMPCase2}
Let $\mathcal{C}=\left[\mathcal{C}_1 \ \mathcal{C}_2 \ \cdots \ \mathcal{C}_M\right]\mathcal{A}$, where $\mathcal{A}\in \mathbb{F}_q^{M\times N}$ and $\mathcal{C}_i$, $i=1, 2,\ldots, M$, are linear codes of length $n$ over $\mathbb{F}_q$. Assume that the set $\{1, 2, \ldots, M\}$ is partitioned into two disjoint subsets $\{i_1, i_2, \ldots, i_{\tau}\}$ and $\{i_{\tau+1}, i_{\tau+2}, \ldots, i_M\}$. Define two MP codes of the same length as $\mathcal{C}$ as follows
\begin{equation*}
\begin{split}
\mathcal{C}^{(1)}&=\left[\mathcal{C}_{i_1} \ \mathcal{C}_{i_2} \ \cdots \ \mathcal{C}_{i_{\tau}}\right]\mathcal{A}\{i_1, i_2, \ldots, i_{\tau}\},\\
\mathcal{C}^{(2)}&=\left[\mathcal{C}_{i_{\tau+1}} \ \mathcal{C}_{i_{\tau+2}} \ \cdots \ \mathcal{C}_{i_M}\right]\mathcal{A}\{i_{\tau+1}, i_{\tau+2}, \ldots, i_M\}.
\end{split}
\end{equation*}
Then $\mathcal{C}=\mathcal{C}^{(1)} + \mathcal{C}^{(2)}$, $\mathcal{C}^{\perp}=\mathcal{C}^{(1)\perp} \cap \mathcal{C}^{(2) \perp}$, and  $\mathcal{C}^{\perp_\ell}=\mathcal{C}^{(1) \perp_\ell} \cap \mathcal{C}^{(2) \perp_\ell}$.
\end{theorem}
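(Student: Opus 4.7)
The plan is to proceed in three stages, matching the three claims, and to keep track only of the underlying linear-algebra fact that decomposing the row index set of $\mathcal{A}$ corresponds to decomposing an MP codeword as a sum.

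\textbf{Step 1: decomposing $\mathcal{C}$ as $\mathcal{C}^{(1)}+\mathcal{C}^{(2)}$.} A generic codeword of $\mathcal{C}$ has the matrix form $[\mathbf{c}_1\ \cdots\ \mathbf{c}_M]\mathcal{A}$, which I would expand as $\sum_{j=1}^{M} \mathbf{c}_j^{T}\,\mathcal{A}_j$, where $\mathcal{A}_j$ denotes the $j$-th row of $\mathcal{A}$. Splitting this sum according to the partition $\{i_1,\ldots,i_\tau\}\sqcup\{i_{\tau+1},\ldots,i_M\}$, the first group of terms is an arbitrary element of $\mathcal{C}^{(1)}$ and the second is an arbitrary element of $\mathcal{C}^{(2)}$, giving $\mathcal{C}\subseteq \mathcal{C}^{(1)}+\mathcal{C}^{(2)}$. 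For the reverse inclusion, any element of $\mathcal{C}^{(1)}$ equals $[\mathbf{c}_1\ \cdots\ \mathbf{c}_M]\mathcal{A}$ with $\mathbf{c}_j=\mathbf{0}$ for $j\notin\{i_1,\ldots,i_\tau\}$ and is therefore in $\mathcal{C}$; the same holds for $\mathcal{C}^{(2)}$, and since $\mathcal{C}$ is linear this gives $\mathcal{C}^{(1)}+\mathcal{C}^{(2)}\subseteq\mathcal{C}$. Vectorization preserves these identities because it is $\mathbb{F}_q$-linear.

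\textbf{Step 2: the Euclidean dual formula.} This now follows from a standard identity, but I would verify it directly so as to parallel the Galois argument. For $\mathbf{a}\in\mathbb{F}_q^{nN}$, having $\langle\mathbf{c},\mathbf{a}\rangle=0$ for every $\mathbf{c}\in\mathcal{C}=\mathcal{C}^{(1)}+\mathcal{C}^{(2)}$ amounts to $\langle\mathbf{c}^{(1)}+\mathbf{c}^{(2)},\mathbf{a}\rangle=0$ for all $\mathbf{c}^{(i)}\in\mathcal{C}^{(i)}$. Specializing once to $\mathbf{c}^{(2)}=\mathbf{0}$ and once to $\mathbf{c}^{(1)}=\mathbf{0}$, bilinearity forces $\mathbf{a}\in\mathcal{C}^{(1)\perp}\cap\mathcal{C}^{(2)\perp}$, and the converse is immediate. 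Hence $\mathcal{C}^{\perp}=\mathcal{C}^{(1)\perp}\cap\mathcal{C}^{(2)\perp}$.

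\textbf{Step 3: the Galois dual formula.} I would give two equivalent derivations and include whichever reads more cleanly. The quickest is to invoke property~2 from Section~\ref{Prelim}: $\mathcal{C}^{\perp_\ell}=\sigma^{e-\ell}(\mathcal{C}^{\perp})$. Applying $\sigma^{e-\ell}$ to the Euclidean identity of Step~2 and using that $\sigma$, being a bijective $\mathbb{F}_p$-linear map on $\mathbb{F}_q^{nN}$, commutes with intersections, yields
\begin{equation*}
\mathcal{C}^{\perp_\ell}
=\sigma^{e-\ell}\!\left(\mathcal{C}^{(1)\perp}\cap\mathcal{C}^{(2)\perp}\right)
=\sigma^{e-\ell}\!\left(\mathcal{C}^{(1)\perp}\right)\cap\sigma^{e-\ell}\!\left(\mathcal{C}^{(2)\perp}\right)
=\mathcal{C}^{(1)\perp_\ell}\cap\mathcal{C}^{(2)\perp_\ell}.
\end{equation*}
Alternatively I could argue from scratch by replacing $\langle\cdot,\cdot\rangle$ with $\langle\cdot,\cdot\rangle_\ell$ in Step~2; the Galois form is still $\mathbb{F}_q$-linear in the first argument, so the same specialization trick works.

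\textbf{Expected obstacle.} There is no deep obstacle; the only thing to handle carefully is that $\mathcal{A}$ may have non-full row rank, so the sum $\mathcal{C}^{(1)}+\mathcal{C}^{(2)}$ need not be direct. I would therefore keep the proof written entirely in terms of sums and intersections (which remain valid without linear independence), and deliberately avoid any dimension-counting, since neither $\dim\mathcal{C}=\dim\mathcal{C}^{(1)}+\dim\mathcal{C}^{(2)}$ nor an internal direct sum is asserted by the theorem.
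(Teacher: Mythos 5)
Your proposal is correct and follows essentially the same route as the paper: both first show $\mathcal{C}=\mathcal{C}^{(1)}+\mathcal{C}^{(2)}$ (the paper does this via a block-row permutation of $\mathcal{A}$, you via a rank-one row expansion $\sum_j \mathbf{c}_j\mathcal{A}_j$, which are the same decomposition in different notation), then pass to duals via $(U+V)^\perp=U^\perp\cap V^\perp$, and finally apply $\sigma^{e-\ell}$ and commute it with the intersection. The only cosmetic slip is the stray transpose in $\mathbf{c}_j^{T}\mathcal{A}_j$; since $\mathbf{c}_j$ is a column and $\mathcal{A}_j$ a row, the outer product $\mathbf{c}_j\mathcal{A}_j$ is already the intended $n\times N$ matrix.
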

\begin{proof}
By permuting the constituent codes $\mathcal{C}_1, \mathcal{C}_2, \ldots, \mathcal{C}_M$ simultaneously with the rows of $\mathcal{A}$, we have 
\begin{equation*}
\begin{split}
\mathcal{C}&=\left[\mathcal{C}_1 \ \mathcal{C}_2 \ \cdots \ \mathcal{C}_M\right]\mathcal{A}\\
&=\left[\mathcal{C}_{i_1} \ \mathcal{C}_{i_2} \ \cdots \ \mathcal{C}_{i_{\tau}} \ \mathcal{C}_{i_{\tau+1}} \ \mathcal{C}_{i_{\tau+2}} \ \cdots \ \mathcal{C}_{i_M}\right] \left[\begin{matrix} \mathcal{A}\{i_1, i_2, \ldots, i_{\tau}\} \\ \mathcal{A}\{i_{\tau+1}, i_{\tau+2}, \ldots, i_M\}\end{matrix}\right]\\
&=\left[\mathcal{C}_{i_1} \ \mathcal{C}_{i_2} \ \cdots \ \mathcal{C}_{i_{\tau}}\right]\mathcal{A}\{i_1, i_2, \ldots, i_{\tau}\}+\left[\mathcal{C}_{i_{\tau+1}} \ \mathcal{C}_{i_{\tau+2}} \ \cdots \ \mathcal{C}_{i_M}\right]\mathcal{A}\{i_{\tau+1}, i_{\tau+2}, \ldots, i_M\}\\
&= \mathcal{C}^{(1)} + \mathcal{C}^{(2)}.
\end{split}
\end{equation*}
Consequently, elementary properties of linear codes imply that $\mathcal{C}^\perp= \left(\mathcal{C}^{(1)} + \mathcal{C}^{(2)}\right)^\perp= \mathcal{C}^{(1) \perp} \cap \mathcal{C}^{(2) \perp}$, and Galois dual properties show that  
\begin{equation*}
\begin{split}
\mathcal{C}^{\perp_\ell}=\sigma^{e-\ell}\left(\mathcal{C}^\perp\right)&=\sigma^{e-\ell}\left(\mathcal{C}^{(1) \perp} \cap \mathcal{C}^{(2) \perp}\right)\\&=\sigma^{e-\ell}\left(\mathcal{C}^{(1) \perp}\right) \cap \sigma^{e-\ell}\left(\mathcal{C}^{(2) \perp}\right)= \mathcal{C}^{(1) \perp_\ell} \cap \mathcal{C}^{(2) \perp_\ell}.
\end{split}
\end{equation*}
\end{proof}

By repeatedly applying Theorem \ref{dualMPCase2}, we conclude that partitioning the rows of the defining matrix $\mathcal{A}$ of an MP code allows the code to be expressed as a sum of MP codes, each with a defining matrix that is a submatrix of $\mathcal{A}$. Subsection \ref{EandGdualMP_S1} gives a complete description of the dual of any MP code with a full row rank defining matrix. Therefore, when applying Theorem \ref{dualMPCase2}, it is advantageous to partition the set of rows of $\mathcal{A}$ into subsets, each of which is independent. This allows the MP code to be written as a sum of MP codes, each of which has a dual as described in Subsection \ref{EandGdualMP_S1}. In our notations, an MP code $\mathcal{C}$ with a defining matrix $\mathcal{A}\in \mathbb{F}_q^{M\times N}$ of rank $\le M$ can be written as a sum of MP codes $\mathcal{C}^{(t)}$, where each $\mathcal{C}^{(t)}$ has a defining matrix $\mathcal{A}_t \in \mathbb{F}_q^{M_t \times N}$ of rank $M_t$. To do this, we partition $\mathcal{A}$ into sub-matrices $\mathcal{A}_t$, each of which consists of $M_t$ linearly independent rows. Now $\mathcal{C}=\sum_t \mathcal{C}^{(t)}$. Since each $\mathcal{C}^{(t)}$ has a full row rank defining matrix, its dual is completely determined by Subsection \ref{EandGdualMP_S1}. It follows from Theorem \ref{dualMPCase2} that $\mathcal{C}^\perp=\cap_t \mathcal{C}^{(t)\perp}$ and $\mathcal{C}^{\perp_\ell}=\cap_t \mathcal{C}^{(t)\perp_\ell}$. For instance, let $\mathcal{C}=\left[\mathcal{C}_1 \ \mathcal{C}_2 \ \mathcal{C}_3 \ \mathcal{C}_4 \ \mathcal{C}_5 \right]\mathcal{A}$, where
\begin{equation*}
\mathcal{A}= \begin{bmatrix}
   1 &1\\
   1& 1\\
   1&  0\\
   0&  1\\
   1& 0
\end{bmatrix}\in\mathbb{F}_2^{5 \times 2}.
\end{equation*}
If we set 
\begin{equation*}
\mathcal{A}_1=\mathcal{A}\{1,3\}= \begin{bmatrix}
   1 &1\\
   1&  0
\end{bmatrix}\quad , \quad 
\mathcal{A}_2=\mathcal{A}\{2,4\}= \begin{bmatrix}
   1& 1\\
   0&  1
   \end{bmatrix}\quad , \text{ and}\quad
\mathcal{A}_3=\mathcal{A}\{5\}= \begin{bmatrix}
   1& 0
\end{bmatrix},
\end{equation*}
then $\mathcal{C}=\mathcal{C}^{(1)}+\mathcal{C}^{(2)}+\mathcal{C}^{(3)}$, where $\mathcal{C}^{(1)}=\left[\mathcal{C}_1 \ \mathcal{C}_3\right]\mathcal{A}_1$, $\mathcal{C}^{(2)}=\left[\mathcal{C}_2 \ \mathcal{C}_4\right]\mathcal{A}_2$, and $\mathcal{C}^{(3)}=\left[\mathcal{C}_5\right]\mathcal{A}_3$. Then $\mathcal{C}^\perp=\mathcal{C}^{(1)\perp}\cap\mathcal{C}^{(2)\perp}\cap\mathcal{C}^{(3)\perp}$. We have seen in Theorem \ref{dualMP} that 
\begin{equation*}
\mathcal{C}^{(1)\perp}=\left[\mathcal{C}_1^\perp \ \mathcal{C}_3^\perp\right]\left(\mathcal{A}_1^{-1}\right)^T  \quad,\quad \mathcal{C}^{(2)\perp}=\left[\mathcal{C}_2^\perp \ \mathcal{C}_4^\perp\right]\left(\mathcal{A}_2^{-1}\right)^T \quad \text{, and}\quad \mathcal{C}^{(3)\perp}=\left[\mathcal{C}_5^\perp \ \mathcal{F}\right]\mathcal{I}_2.
\end{equation*}

It is worth noting that Theorem \ref{dualMPCase2} explains why the dual of an MP code with a non-full row rank defining matrix is not generally MP. This is due to the fact that the intersection of MP codes is not necessarily MP.

\begin{example}
Consider the binary MP code
\begin{equation*}
\mathcal{C}=\left[\mathcal{C}_1 \ \mathcal{C}_2 \ \mathcal{C}_3 \ \mathcal{C}_4 \right] \begin{bmatrix}
   0& 1\\
   0& 1\\
   1& 0\\
   1&  1
   \end{bmatrix},
\end{equation*}
where $\mathcal{C}_1, \mathcal{C}_2, \mathcal{C}_3, \mathcal{C}_4$ are the linear binary codes of length $5$ generated by
\begin{align*}
\mathbf{G}_1&=\begin{bmatrix}
 0 & 1 & 0 & 1 & 1\\
 0 & 0 & 1 & 1 & 0
\end{bmatrix} \quad , \quad
\mathbf{G}_2=\begin{bmatrix}
 0 & 1 & 1 & 0 & 1
\end{bmatrix},\\
\mathbf{G}_3&=\begin{bmatrix}
 1 & 1 & 0 & 1 & 0
\end{bmatrix} \quad , \quad
\mathbf{G}_4=\begin{bmatrix}
 1 & 0 & 0 & 0 & 1\\
 0 & 1 & 0 & 0 & 1\\
 0 & 0 & 1 & 0 & 1\\
 0 & 0 & 0 & 1 & 1
\end{bmatrix}.
\end{align*}
According to \cite{Grassl:codetables}, $\mathcal{C}$ has the best-known parameters $[10,7,2]$. By Theorem \ref{dualMPCase2}, 
\begin{equation*}
\mathcal{C}=\left[\mathcal{C}_1 \ \mathcal{C}_3 \right] \begin{bmatrix}
   0& 1\\
   1& 0
   \end{bmatrix}+\left[\mathcal{C}_2 \ \mathcal{C}_4 \right] \begin{bmatrix}
   0& 1\\
   1&  1
   \end{bmatrix}.
\end{equation*}
Now we see from Theorems \ref{dualMP} and \ref{dualMPCase2} that
\begin{equation*}
\mathcal{C}^\perp=\left[\mathcal{C}_1^\perp \ \mathcal{C}_3^\perp \right] \begin{bmatrix}
   0& 1\\
   1& 0
   \end{bmatrix} \cap \left[\mathcal{C}_2^\perp \ \mathcal{C}_4^\perp \right] \begin{bmatrix}
   1& 1\\
   1& 0
   \end{bmatrix}.
\end{equation*}
We notice that $\mathcal{C}^\perp$ also has the best-known parameters $[10,3,5]$.
\end{example}

\begin{example}
Let $\theta\in\mathbb{F}_4$ be such that $\theta^2+\theta+1=0$ and let
\begin{align*}
\mathbf{G}_1&=\begin{bmatrix}
   1 & 1 & 0 & 0 & 0&\theta^2&\theta^2&\theta
\end{bmatrix},
&\mathbf{G}_2=\begin{bmatrix}
   1 & 0 & 0&\theta^2&\theta & 0 & 1&\theta\\
   0 & 1 & 0&\theta&\theta&\theta^2&\theta^2&\theta\\
   0 & 0 & 1 & 1 & 1&\theta^2&\theta&\theta
\end{bmatrix} , \\
\mathbf{G}_3&=\begin{bmatrix}
   1 & 0 & 0&\theta^2 & 0&\theta^2&\theta&\theta^2\\
   0 & 1 & 0&\theta & 0 & 0 & 1&\theta^2\\
   0 & 0 & 1 & 0 & 0 & 0 & 0&\theta\\
   0 & 0 & 0 & 0 & 1 & 1 & 0&\theta^2
\end{bmatrix},
&\mathbf{G}_4=\begin{bmatrix}
   1 & 0 & 0 & 0 & 0 & 1 & 0 & 1\\
   0 & 1 & 0 & 0 & 0 & 1 & 0 & 1\\
   0 & 0 & 1 & 0 & 0& \theta & 0&\theta\\
   0 & 0 & 0 & 1 & 0&\theta^2 & 0 & 1\\
   0 & 0 & 0 & 0 & 1& \theta^2 & 0 & 1\\
   0 & 0 & 0 & 0 &0 & 0 & 1&\theta^2\\
\end{bmatrix}  , \\
\mathbf{G}_5&=\begin{bmatrix}
   1 & 0 & 0 & 0 & 0 & 0 & 1&\theta^2\\
   0 & 1 & 0 & 0 & 0 & 0 & 1 & 0\\
   0 & 0 & 1 & 0 & 0 & 0&\theta&\theta\\
  0 & 0 & 0 & 1 & 0 & 0 & 0&\theta\\
   0 & 0 & 0 & 0 & 1 & 0&  1&  0\\
   0 & 0 & 0 & 0 & 0 & 1 & 0&\theta^2
\end{bmatrix},
&\mathcal{A}=\begin{bmatrix}
   1 &  0 &\theta\\
   0  & 1 & 0\\
 \theta^2 & 0 &\theta\\
   0 &\theta^2 &\theta^2\\
   1 &  1  & 1
   \end{bmatrix}.
\end{align*}
Consider the MP code 
\begin{equation*}
\mathcal{C}=\left[\mathcal{C}_1 \ \mathcal{C}_2 \ \mathcal{C}_3 \ \mathcal{C}_4 \ \mathcal{C}_5 \right] \mathcal{A},
\end{equation*}
where $\mathcal{C}_1$, $\mathcal{C}_2$, $\mathcal{C}_3$, $\mathcal{C}_4$, and $\mathcal{C}_5$ are the linear codes over $\mathbb{F}_4$ of length $8$ generated by $\mathbf{G}_1$, $\mathbf{G}_2$, $\mathbf{G}_3$, $\mathbf{G}_4$, and $\mathbf{G}_5$, respectively. According to \cite{Grassl:codetables}, $\mathcal{C}$ has the best-known parameters $[24,20,3]$. By Theorems \ref{dualMP} and \ref{dualMPCase2},
\begin{equation*}
\begin{split}
\mathcal{C}&=\left[\mathcal{C}_1 \ \mathcal{C}_2 \ \mathcal{C}_3 \right] \begin{bmatrix}
   1 &  0 &\theta\\
   0  & 1 & 0\\
 \theta^2 & 0 &\theta
   \end{bmatrix} + \left[\mathcal{C}_4 \ \mathcal{C}_5 \right] \begin{bmatrix}
   0 &\theta^2 &\theta^2\\
   1 &  1  & 1
   \end{bmatrix}\\
   &=\left[\mathcal{C}_1 \ \mathcal{C}_2 \ \mathcal{C}_3 \right] \begin{bmatrix}
   1 &  0 &\theta\\
   0  & 1 & 0\\
 \theta^2 & 0 &\theta
   \end{bmatrix} + \left[\mathcal{C}_4 \ \mathcal{C}_5  \ \mathcal{O} \right] \begin{bmatrix}
   0 &\theta^2 &\theta^2\\
   1 &  1  & 1\\
   0 &  1  & 0
   \end{bmatrix}.
\end{split}
\end{equation*}
Therefore,
\begin{equation*}
\mathcal{C}^\perp=\left[\mathcal{C}_1^\perp \ \mathcal{C}_2^\perp \ \mathcal{C}_3^\perp \right] \begin{bmatrix}
 \theta^2 & 0 & 1\\
   0 & 1 & 0\\
 \theta^2 & 0&\theta
   \end{bmatrix} \cap \left[\mathcal{C}_4^\perp \ \mathcal{C}_5^\perp \ \mathcal{F} \right] \begin{bmatrix}
 \theta & 0&\theta\\
   1 & 0 & 0\\
   0 & 1 & 1
   \end{bmatrix}.
\end{equation*}
We notice that $\mathcal{C}^\perp$ has the parameters $[24,4,15]$ over $\mathbb{F}_4$, hence, it is almost optimal according to \cite{Grassl:codetables}. 
\end{example}

\section{Self-orthogonal MP codes}
\label{selforthgof MP}
In this section, we apply the results of Section \ref{EandGdualMP} to present necessary and sufficient conditions for an MP code to be Galois self-orthogonal. Current studies in the literature on self-orthogonality of MP codes require specific forms for the defining matrices. For instance, in \cite{Jitman,JitmanConf}, $\mathcal{A}\mathcal{A}^{\dagger}$ must be diagonal or anti-diagonal, where $^{\dagger}$ denotes the Hermitian transpose. In \cite{Zhang2024}, $\mathcal{A}$ must satisfy the partitioned Hermitian orthogonal property, whereas \cite{Edgar} studied MP codes associated with Toeplitz square matrices. Our study on Galois self-orthogonality of MP codes is more general in the sense that we do not require special forms, and the defining matrix does not have to be of full row rank. The main objective of this section is to prove the following.

\begin{theorem}
\label{GaloisSelforthMP_Total}
Let $\mathcal{C}=\left[\mathcal{C}_1 \ \mathcal{C}_2 \ \cdots \ \mathcal{C}_M\right]\mathcal{A}$, where $\mathcal{A}\in \mathbb{F}_q^{M\times N}$, $q=p^e$ for some prime $p$, and $\mathcal{C}_1 , \mathcal{C}_2 , \ldots , \mathcal{C}_M$ are linear codes of length $n$ over $\mathbb{F}_q$. For $0\le \ell <e$, the MP code $\mathcal{C}$ is $\ell$-Galois self-orthogonal if and only if $\mathcal{C}_i \subseteq \mathcal{C}_j^{\perp_\ell}$ for every nonzero $(i,j)^{\text{th}}$ entry of $\sigma^\ell\left(\mathcal{A}\right)\mathcal{A}^T$. In other words, if $\sigma^\ell\left(\mathcal{A}\right)\mathcal{A}^T=\left[\zeta_{i,j}\right]$, then $\mathcal{C}$ is $\ell$-Galois self-orthogonal if and only if $\mathcal{C}_i \subseteq \mathcal{C}_j^{\perp_\ell}$ for every $\zeta_{i,j}\ne 0$.
\end{theorem}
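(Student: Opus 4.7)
The plan is to compute $\langle X,Y\rangle_\ell$ directly for arbitrary codewords $X=[\mathbf{c}_1\ \cdots\ \mathbf{c}_M]\mathcal{A}$ and $Y=[\mathbf{c}'_1\ \cdots\ \mathbf{c}'_M]\mathcal{A}$ of $\mathcal{C}$, expanding the result column-by-column using the $n\times N$ matrix picture of an MP codeword and then collecting the coefficients of the $\ell$-Galois inner products $\langle \mathbf{c}_k,\mathbf{c}'_l\rangle_\ell$. The $j$-th column of $X$ is $\sum_k a_{kj}\mathbf{c}_k$, so a routine Frobenius expansion should give
\begin{equation*}
\langle X,Y\rangle_\ell=\sum_{j=1}^N\sum_{k,l=1}^M a_{kj}\, a_{lj}^{p^\ell}\langle \mathbf{c}_k,\mathbf{c}'_l\rangle_\ell=\sum_{k,l=1}^M\Bigl(\sum_{j=1}^N a_{lj}^{p^\ell}a_{kj}\Bigr)\langle \mathbf{c}_k,\mathbf{c}'_l\rangle_\ell.
\end{equation*}
The bracketed coefficient is exactly $(\sigma^\ell(\mathcal{A})\mathcal{A}^T)_{lk}=\zeta_{lk}$, so after renaming indices the master identity
\begin{equation*}
\langle X,Y\rangle_\ell=\sum_{i,j=1}^M\zeta_{ij}\,\langle \mathbf{c}_j,\mathbf{c}'_i\rangle_\ell
\end{equation*}
emerges. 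This identity already handles both situations described in the theorem (full or non-full row rank of $\mathcal{A}$) with no additional work, so I would not split the argument into the two cases that the paper organises separately.

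Sufficiency then follows by inspection: if $\mathcal{C}_i\subseteq \mathcal{C}_j^{\perp_\ell}$ whenever $\zeta_{ij}\ne 0$, then every surviving term in the master sum is of the form $\zeta_{ij}\cdot 0$ (because any $\mathbf{c}'_i\in\mathcal{C}_i$ lies in $\mathcal{C}_j^{\perp_\ell}$), so $\langle X,Y\rangle_\ell=0$ throughout $\mathcal{C}\times\mathcal{C}$ and $\mathcal{C}\subseteq \mathcal{C}^{\perp_\ell}$.

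For necessity I would argue contrapositively by a ``sparse choice'' of constituent codewords. Suppose some pair $(i_0,j_0)$ satisfies $\zeta_{i_0 j_0}\ne 0$ yet $\mathcal{C}_{i_0}\not\subseteq \mathcal{C}_{j_0}^{\perp_\ell}$, so that there exist $\mathbf{c}_{j_0}\in\mathcal{C}_{j_0}$ and $\mathbf{c}'_{i_0}\in\mathcal{C}_{i_0}$ with $\langle \mathbf{c}_{j_0},\mathbf{c}'_{i_0}\rangle_\ell\ne 0$. Set every other $\mathbf{c}_j$ and $\mathbf{c}'_i$ equal to $\mathbf{0}$, which is always a legal choice because the constituent codes are linear. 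The master sum then collapses to the single nonzero term $\zeta_{i_0 j_0}\langle \mathbf{c}_{j_0},\mathbf{c}'_{i_0}\rangle_\ell$, contradicting self-orthogonality of $\mathcal{C}$.

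The main obstacle is bookkeeping rather than mathematics: pinning down that the Frobenius twist really lands on $\sigma^\ell(\mathcal{A})\mathcal{A}^T$ (and not $\mathcal{A}\sigma^\ell(\mathcal{A})^T$, its transpose, or a different Frobenius power), and that the asymmetry of the $\ell$-Galois inner product forces the inclusion in the direction $\mathcal{C}_i\subseteq \mathcal{C}_j^{\perp_\ell}$ rather than the opposite order. As a sanity check at the end I would specialise to $\ell=0$ (Euclidean) and, for even $e$, to $\ell=e/2$ (Hermitian), and verify that the conclusion reduces to the known self-orthogonality criteria. Crucially, nothing in the argument uses the rank of $\mathcal{A}$, so the same proof simultaneously covers Theorem \ref{GaloisSelforthMP} and the non-full row rank case discussed in Subsection \ref{SOwithsmallrank}.
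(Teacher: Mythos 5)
Your proof is correct, and it takes a genuinely different (and arguably cleaner) route than the paper. The paper first proves the duality formula $\mathcal{C}^\perp=[\mathcal{C}_1^\perp\ \cdots\ \mathcal{C}_M^\perp\ \mathcal{F}\ \cdots\ \mathcal{F}](\mathcal{B}^{-1})^T$ for full row rank $\mathcal{A}$ (Theorem \ref{dualMP}, Corollary \ref{CoroGaloisDual}), deduces from it a containment criterion between two MP codes (Lemma \ref{lem_original}), specialises to self-orthogonality when $\mathcal{A}$ has full row rank (Theorem \ref{GaloisSelforthMP}), and then handles the rank-deficient case by decomposing $\mathcal{C}$ into a sum of full-row-rank MP summands and assembling conditions block by block (Lemmas \ref{lem1}, \ref{lem2}). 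Your approach sidesteps all of this: you expand the $\ell$-Galois inner product of two generic MP codewords column by column, exploit the linearity in the first slot and $\sigma^\ell$-semilinearity in the second, and arrive at the single identity
\begin{equation*}
\langle X,Y\rangle_\ell=\sum_{i,j=1}^M \zeta_{i,j}\,\langle \mathbf{c}_j,\mathbf{c}'_i\rangle_\ell,\qquad \zeta_{i,j}=\bigl(\sigma^\ell(\mathcal{A})\mathcal{A}^T\bigr)_{i,j},
\end{equation*}
from which sufficiency is immediate and necessity follows by zeroing out all but one constituent codeword on each side. The index bookkeeping is correct: the coefficient $\sum_j a_{kj}a_{lj}^{p^\ell}$ is indeed $(\sigma^\ell(\mathcal{A})\mathcal{A}^T)_{lk}$, and $\langle \mathbf{c}_j,\mathbf{c}'_i\rangle_\ell=0$ for all choices is exactly $\mathcal{C}_i\subseteq\mathcal{C}_j^{\perp_\ell}$, matching the asymmetric inclusion in the statement. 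What your proof buys is uniformity and economy: it needs no invertible extension $\mathcal{B}$, no dual formula, and no case split on $\operatorname{rank}\mathcal{A}$. What the paper's route buys in exchange is the intermediate results themselves (the explicit Galois dual formula and Lemma \ref{lem_original}), which are reused later for the dual-containment results in Section \ref{dualcontnof MP} and are of independent interest; your master identity does not directly give those.
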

To prove Theorem \ref{GaloisSelforthMP_Total}, we consider two cases separately: $\mathcal{A}$ is of full row rank, and $\mathcal{A}$ has a rank less than $M$. Each case is considered in a subsection.

\subsection{$\mathcal{A}$ of rank $M$}
\label{SO_RankM}
In this subsection, we prove Theorem \ref{GaloisSelforthMP_Total} for an MP code with a defining matrix $\mathcal{A}\in\mathbb{F}_q^{M\times N}$ of rank $M$. To achieve the desired result, we first establish a more general setting from which the theorem can be easily deduced.

\begin{lemma}
\label{lem_original}
Let $\mathcal{C}^{(1)}=\left[\mathcal{C}_1^{(1)} \ \mathcal{C}_2^{(1)} \ \cdots \ \mathcal{C}_{M_1}^{(1)}\right]\mathcal{A}_1$ and $\mathcal{C}^{(2)}=\left[\mathcal{C}_1^{(2)} \ \mathcal{C}_2^{(2)} \ \cdots \ \mathcal{C}_{M_2}^{(2)}\right]\mathcal{A}_2$, where $\mathcal{A}_1\in\mathbb{F}_q^{M_1\times N}$ and $\mathcal{A}_2\in\mathbb{F}_q^{M_2\times N}$ are of full row rank. Then $\mathcal{C}^{(1)} \subseteq \mathcal{C}^{(2)\perp_\ell}$ if and only if $\mathcal{C}_i^{(1)} \subseteq \mathcal{C}^{(2)\perp_\ell}_j$ for every nonzero $(i,j)^{\text{th}}$ entry of $\sigma^{\ell}\left(\mathcal{A}_1\right)\mathcal{A}_2^T$.
\end{lemma}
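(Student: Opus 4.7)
The plan is to reduce the global containment $\mathcal{C}^{(1)} \subseteq \mathcal{C}^{(2)\perp_\ell}$ to a single ``master identity'' that cleanly separates the roles of the defining matrices and the constituent codewords. First, I would unravel the definition: the containment is equivalent to requiring $\langle \mathbf{c}^{(2)}, \mathbf{c}^{(1)} \rangle_\ell = 0$ for every pair $\mathbf{c}^{(1)} \in \mathcal{C}^{(1)}$, $\mathbf{c}^{(2)} \in \mathcal{C}^{(2)}$. Working with the codewords in matrix form $C^{(t)} = [\mathbf{c}_1^{(t)} \ \cdots \ \mathbf{c}_{M_t}^{(t)}]\mathcal{A}_t$, the Galois inner product on the vectorizations equals the trace $\operatorname{tr}\!\bigl(C^{(2)}\sigma^\ell(C^{(1)})^T\bigr)$, and since the Frobenius power is a ring homomorphism on $\mathbb{F}_q$ applied componentwise, one has $\sigma^\ell(C^{(1)}) = [\sigma^\ell(\mathbf{c}_1^{(1)}) \ \cdots \ \sigma^\ell(\mathbf{c}_{M_1}^{(1)})]\,\sigma^\ell(\mathcal{A}_1)$.

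Second, I would substitute these matrix forms into the trace and apply the cyclic property together with the identity $\operatorname{tr}(PQ)=\sum_{i,j}P_{i,j}Q_{j,i}$ to arrive at
$$\langle \mathbf{c}^{(2)}, \mathbf{c}^{(1)} \rangle_\ell \;=\; \sum_{i=1}^{M_1}\sum_{j=1}^{M_2} \zeta_{i,j}\,\langle \mathbf{c}_j^{(2)}, \mathbf{c}_i^{(1)} \rangle_\ell,$$
where $\zeta_{i,j}$ denotes the $(i,j)$ entry of $\sigma^\ell(\mathcal{A}_1)\mathcal{A}_2^T$. This identity is the heart of the lemma: the left-hand side mixes all of the MP data, while the right-hand side is a weighted sum of purely constituent-level Galois inner products with weights supplied by $\sigma^\ell(\mathcal{A}_1)\mathcal{A}_2^T$.

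Third, from the master identity both directions follow transparently. For the ``if'' direction, suppose $\mathcal{C}_i^{(1)} \subseteq \mathcal{C}_j^{(2)\perp_\ell}$ whenever $\zeta_{i,j}\ne 0$; then in every term either the coefficient vanishes or the constituent Galois inner product vanishes, so the sum is $0$ for every pair of codewords, giving $\mathcal{C}^{(1)} \subseteq \mathcal{C}^{(2)\perp_\ell}$. For the ``only if'' direction, fix indices $(i_0,j_0)$ with $\zeta_{i_0,j_0}\ne 0$, pick arbitrary $\mathbf{c}_{i_0}^{(1)} \in \mathcal{C}_{i_0}^{(1)}$ and $\mathbf{c}_{j_0}^{(2)} \in \mathcal{C}_{j_0}^{(2)}$, and set every other constituent on both sides equal to the zero vector (legitimate since each linear code contains $\mathbf{0}$). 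The master identity collapses to $\zeta_{i_0,j_0}\langle \mathbf{c}_{j_0}^{(2)}, \mathbf{c}_{i_0}^{(1)} \rangle_\ell = 0$, forcing $\langle \mathbf{c}_{j_0}^{(2)}, \mathbf{c}_{i_0}^{(1)} \rangle_\ell = 0$ and hence $\mathcal{C}_{i_0}^{(1)} \subseteq \mathcal{C}_{j_0}^{(2)\perp_\ell}$.

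The main obstacle I expect is not conceptual but notational: the derivation of the master identity requires keeping transposes, summation indices, and the side on which $\sigma^\ell$ acts consistent, so that the precise matrix $\sigma^\ell(\mathcal{A}_1)\mathcal{A}_2^T$ — rather than one of its transposes, conjugates, or $\sigma$-shifts — appears in the final formula. Once that identity is correctly set up, each implication is essentially a one-line consequence of zero-padding and linearity.
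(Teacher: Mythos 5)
Your proof is correct, and it takes a genuinely different route from the paper's. The paper proves this lemma by invoking the Galois-dual formula (Theorem \ref{dualMP} via Corollary \ref{CoroGaloisDual}), which itself rests on extending $\mathcal{A}_2$ to an invertible matrix $\mathcal{B}_2\in\mathbb{F}_q^{N\times N}$, padding with $\mathcal{O}$-constituents, right-multiplying the containment by $\mathcal{B}_2^T$ to pass to direct-sum form, and then arguing column by column. Your argument bypasses all of that: you express the Galois inner product of two MP codewords as a trace, factor out the defining matrices, and obtain the bilinear ``master identity''
\begin{equation*}
\langle \mathbf{c}^{(2)}, \mathbf{c}^{(1)} \rangle_\ell \;=\; \sum_{i=1}^{M_1}\sum_{j=1}^{M_2} \zeta_{i,j}\,\langle \mathbf{c}_j^{(2)}, \mathbf{c}_i^{(1)} \rangle_\ell ,
\end{equation*}
from which both directions follow by inspection and by a zero-padding choice of constituent codewords. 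The index bookkeeping is right: with $P=\mathcal{A}_2\sigma^\ell(\mathcal{A}_1)^T$ and $Q=V^TU$ one has $\mathrm{tr}(PQ)=\sum_{j,i}P_{j,i}Q_{i,j}$, and $P_{j,i}$ is precisely the $(i,j)$ entry of $\sigma^\ell(\mathcal{A}_1)\mathcal{A}_2^T$, so the correct matrix appears. What your approach buys is a more elementary, self-contained proof that does not depend on Theorem \ref{dualMP} or on any choice of extension matrix $\mathcal{B}_2$; notably, nothing in your argument uses the full-row-rank hypothesis on $\mathcal{A}_1$ or $\mathcal{A}_2$, so you have in fact proved a slightly stronger statement (the paper needs the hypothesis because it routes through the dual-code formula). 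The paper's approach, by contrast, keeps the argument uniform with the rest of Section \ref{EandGdualMP} and reuses machinery already established there.
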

\begin{proof}
Choose an invertible matrix $\mathcal{B}_2 \in\mathbb{F}_q^{N\times N}$ such that $\mathcal{B}_2 \{1,\ldots,M_2\}=\mathcal{A}_2$. Let $\mathcal{C}_{M_2+1}^{(2)}=\cdots=\mathcal{C}_N^{(2)}=\mathcal{O}$, then Theorem \ref{dualMP} gives
\begin{align*}
 \mathcal{C}^{(2)\perp_\ell}&= \sigma^{e-\ell}\left(  \mathcal{C}^{(2)\perp}\right)= \sigma^{e-\ell}\left( \left[\mathcal{C}^{(2)\perp}_1 \ \mathcal{C}^{(2)\perp}_2 \ \cdots \ \mathcal{C}^{(2)\perp}_N \right]\left(\mathcal{B}_2^{-1}\right)^T\right).
\end{align*}
Observe that $\mathcal{C}^{(1)} \subseteq \mathcal{C}^{(2)\perp_\ell}$ if and only if $\sigma^{\ell}\left(\mathcal{C}^{(1)}\right)\mathcal{B}_2^T \subseteq \sigma^{\ell}\left(\mathcal{C}^{(2)\perp_\ell}\right)\mathcal{B}_2^T$. Equivalently, 
$$\left[\sigma^{\ell}\mathcal{C}_1^{(1)} \ \sigma^{\ell}\mathcal{C}_2^{(1)} \ \cdots \ \sigma^{\ell}\mathcal{C}_{M_1}^{(1)}\right]\sigma^{\ell}\left(\mathcal{A}_1\right)\mathcal{B}_2^T \subseteq \left[\mathcal{C}^{(2)\perp}_1 \ \mathcal{C}^{(2)\perp}_2 \ \cdots \ \mathcal{C}^{(2)\perp}_N \right].$$
This is true if and only if 
\begin{equation}
\label{Inproof1}
\left[\sigma^{\ell}\mathcal{C}_1^{(1)} \ \sigma^{\ell}\mathcal{C}_2^{(1)} \ \cdots \ \sigma^{\ell}\mathcal{C}_{M_1}^{(1)}\right]\sigma^{\ell}\left(\mathcal{A}_1\right)\text{ Col}_j\left(\mathcal{B}_2^T\right) \subseteq \mathcal{C}^{(2)\perp}_j,
\end{equation}
for each $1\le j\le N$, where $\text{Col}_j$ denotes the $j^{\text{th}}$ column. For $M_2< j\le N$, \eqref{Inproof1} is satisfied because $\mathcal{C}^{(2)\perp}_j=\mathcal{F}$, thus we may restrict ourselves to $1\le j\le M_2$ and replace $\text{Col}_j\left(\mathcal{B}_2^T\right)$ by $\text{Col}_j\left(\mathcal{A}_2^T\right)$. At this point, we have $\mathcal{C}^{(1)} \subseteq \mathcal{C}^{(2)\perp_\ell}$ if and only if
\begin{equation*}
\left[\sigma^{\ell}\mathcal{C}_1^{(1)} \ \sigma^{\ell}\mathcal{C}_2^{(1)} \ \cdots \ \sigma^{\ell}\mathcal{C}_{M_1}^{(1)}\right]\text{ Col}_j\left(\sigma^{\ell}\left(\mathcal{A}_1\right)\mathcal{A}_2^T\right) \subseteq \mathcal{C}^{(2)\perp}_j,
\end{equation*}
for each $1\le j\le M_2$. The linearity of $\mathcal{C}^{(2)\perp}_j$ implies that the last condition is true if and only if  $\sigma^{\ell}\mathcal{C}_i^{(1)} \subseteq \mathcal{C}^{(2)\perp}_j$ for every nonzero $(i,j)^{\text{th}}$ entry in $\sigma^{\ell}\left(\mathcal{A}_1\right)\mathcal{A}_2^T$; but $\sigma^{\ell}\mathcal{C}_i^{(1)} \subseteq \mathcal{C}^{(2)\perp}_j$ if and only if $\mathcal{C}_i^{(1)} \subseteq \mathcal{C}^{(2)\perp_\ell}_j$.
\end{proof}

The main result of this subsection may now be concluded as a special case of Lemma \ref{lem_original} by setting $\mathcal{C}^{(1)}=\mathcal{C}^{(2)}=\mathcal{C}$.

\begin{theorem}
\label{GaloisSelforthMP}
Let $\mathcal{C}=\left[\mathcal{C}_1 \ \mathcal{C}_2 \ \cdots \ \mathcal{C}_M\right]\mathcal{A}$, where $\mathcal{A}\in \mathbb{F}_q^{M\times N}$ has full row rank, $\mathcal{C}_1 , \mathcal{C}_2 , \ldots , \mathcal{C}_M$ are linear codes of length $n$ over $\mathbb{F}_q$, and $q=p^e$. For $0\le \ell <e$, $\mathcal{C}$ is $\ell$-Galois self-orthogonal if and only if $\mathcal{C}_i \subseteq \mathcal{C}_j^{\perp_\ell}$ for every nonzero $(i,j)^{\text{th}}$ entry of $\sigma^\ell\left(\mathcal{A}\right)\mathcal{A}^T$. 
\end{theorem}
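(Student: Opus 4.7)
The plan is to deduce Theorem \ref{GaloisSelforthMP} as an immediate specialization of Lemma \ref{lem_original}. The lemma characterizes when one MP code with a full row rank defining matrix is contained in the $\ell$-Galois dual of another such MP code; self-orthogonality of a single MP code $\mathcal{C}$ is by definition the containment $\mathcal{C}\subseteq \mathcal{C}^{\perp_\ell}$, which is precisely the lemma's hypothesis when the two codes coincide.

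Concretely, I would take $\mathcal{C}^{(1)}=\mathcal{C}^{(2)}=\mathcal{C}$ in Lemma \ref{lem_original}, so that $M_1=M_2=M$, $\mathcal{A}_1=\mathcal{A}_2=\mathcal{A}$, and the constituent codes satisfy $\mathcal{C}_i^{(1)}=\mathcal{C}_i^{(2)}=\mathcal{C}_i$ for $1\le i\le M$. The hypothesis of the lemma that both $\mathcal{A}_1$ and $\mathcal{A}_2$ have full row rank is satisfied by the assumption of the theorem. Applying the lemma then says
$$\mathcal{C}\subseteq \mathcal{C}^{\perp_\ell} \quad \Longleftrightarrow \quad \mathcal{C}_i \subseteq \mathcal{C}_j^{\perp_\ell}\text{ for every nonzero $(i,j)$-entry of }\sigma^\ell(\mathcal{A})\mathcal{A}^T,$$
which is exactly the claimed characterization. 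Since $\mathcal{C}$ is $\ell$-Galois self-orthogonal by definition iff $\mathcal{C}\subseteq \mathcal{C}^{\perp_\ell}$, this completes the argument.

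Because all of the substantive work has already been carried out in Lemma \ref{lem_original}, there is essentially no obstacle to overcome; the proof of the theorem is a two-line invocation. If anything, the point worth emphasizing in the write-up is that the matrix $\sigma^\ell(\mathcal{A})\mathcal{A}^T$ is symmetric in its role (both rows and columns are indexed by the same constituent codes), so the condition naturally captures both $\mathcal{C}_i\subseteq \mathcal{C}_j^{\perp_\ell}$ and its companion $\mathcal{C}_j\subseteq \mathcal{C}_i^{\perp_\ell}$ coming from the $(j,i)$-entry; no extra symmetry argument is needed because the lemma already handles every nonzero entry.
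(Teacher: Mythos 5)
Your proposal is correct and matches the paper's own proof exactly: both set $\mathcal{C}^{(1)}=\mathcal{C}^{(2)}=\mathcal{C}$ in Lemma \ref{lem_original} and read off the equivalence directly. There is nothing to add.
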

\begin{proof}
If we set $\mathcal{C}^{(1)}=\mathcal{C}^{(2)}=\mathcal{C}$, then $M_1=M_2=M$, $\mathcal{C}_i^{(1)}=\mathcal{C}_i^{(2)}=\mathcal{C}_i$, and $\mathcal{A}_1=\mathcal{A}_2=\mathcal{A}$.  Lemma \ref{lem_original} shows that $\mathcal{C} \subseteq \mathcal{C}^{\perp_\ell}$ if and only if $\mathcal{C}_i \subseteq \mathcal{C}^{\perp_\ell}_j$ for every nonzero $(i,j)^{\text{th}}$ entry of $\sigma^{\ell}\left(\mathcal{A}\right)\mathcal{A}^T$.
\end{proof}

\begin{example}
\label{ExGSO}
Let $\theta\in\mathbb{F}_4$ be such that $\theta^2+\theta+1=0$ and let
\begin{align*}
\mathbf{G}_1=\begin{bmatrix}
   1 &  0 &  0 &\theta &\theta^2\\
   0  & 1  & 0  & 1& \theta^2\\
   0  & 0 &  1 &  1 &  1
\end{bmatrix}\quad , \quad
\mathbf{G}_2=\begin{bmatrix}
   1  & 0  & 1 &\theta &\theta^2\\
   0  & 1  & 1& \theta^2& \theta
\end{bmatrix}.
\end{align*}
Consider the MP code $\mathcal{C}=\left[\mathcal{C}_1 \ \mathcal{C}_2 \right]\mathcal{A}$, where $\mathcal{C}_1$, $\mathcal{C}_2$ are the linear codes of length $5$ over $\mathbb{F}_4$ generated by $\mathbf{G}_1$, $\mathbf{G}_2$, respectively, and $\mathcal{A}$ is the NSC matrix
\begin{equation*}
\mathcal{A}= \begin{bmatrix}
    \theta^2 &  1 &\theta^2 &  1\\
   0 &  1 &\theta &\theta
   \end{bmatrix}.
\end{equation*}
We claim that $\mathcal{C}$ is a $1$-Galois self-orthogonal. Indeed,
\begin{equation*}
\sigma\left(\mathcal{A}\right)\mathcal{A}^T = \begin{bmatrix}
   0 &  0\\
   0 &  1
   \end{bmatrix}.
\end{equation*}
By Theorem \ref{GaloisSelforthMP}, $\mathcal{C}$ is $1$-Galois self-orthogonal because $\mathcal{C}_2 \subseteq \sigma\left(\mathcal{C}_2^\perp \right)$, to see this, note that $\mathbf{G}_2 \left(\sigma\mathbf{G}_2\right)^T=\mathbf{0}$. Moreover, $\mathcal{C}$ has the best-known parameters $[20,5,12]$ over $\mathbb{F}_4$ according to \cite{Grassl:codetables}. 
\end{example}

The Euclidean self-orthogonality condition for an MP code with a defining matrix of full row rank may now be easily obtained from Theorem \ref{GaloisSelforthMP} by letting $\ell=0$.

\begin{theorem}
\label{SelforthMP}
Let $\mathcal{C}=\left[\mathcal{C}_1 \ \mathcal{C}_2 \ \cdots \ \mathcal{C}_M\right]\mathcal{A}$, where $\mathcal{C}_1 , \mathcal{C}_2 , \ldots , \mathcal{C}_M$ are linear codes over $\mathbb{F}_q$, and $\mathcal{A}\in \mathbb{F}_q^{M\times N}$ has full row rank. Then $\mathcal{C}$ is Euclidean self-orthogonal if and only if $\mathcal{C}_i \subseteq \mathcal{C}_j^\perp$ for every nonzero $(i,j)^{\text{th}}$ entry of $\mathcal{A}\mathcal{A}^T$. 
\end{theorem}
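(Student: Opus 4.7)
The plan is to observe that Theorem \ref{SelforthMP} is nothing more than the $\ell=0$ specialization of Theorem \ref{GaloisSelforthMP}, which has already been established in the previous subsection. So the proof is essentially a one-line invocation, but I would want to make the reduction transparent by checking that every ingredient of the Galois statement collapses correctly when $\ell=0$.

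First I would recall, from the discussion following the definition of the Galois inner product in Section \ref{Prelim}, that the Euclidean inner product coincides with the $0$-Galois inner product, and therefore the Euclidean dual $\mathcal{C}^\perp$ is exactly $\mathcal{C}^{\perp_0}$ (both for $\mathcal{C}$ itself and for each constituent code $\mathcal{C}_j$). Saying that $\mathcal{C}$ is Euclidean self-orthogonal is the same as saying $\mathcal{C}\subseteq\mathcal{C}^{\perp_0}$. Next, since $\sigma^0$ is the identity automorphism of $\mathbb{F}_q$, we have $\sigma^0(\mathcal{A})=\mathcal{A}$, so the matrix $\sigma^\ell(\mathcal{A})\mathcal{A}^T$ appearing in Theorem \ref{GaloisSelforthMP} specializes to the ordinary Gram-type matrix $\mathcal{A}\mathcal{A}^T$.

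With these two substitutions, the equivalence given by Theorem \ref{GaloisSelforthMP} reads: $\mathcal{C}\subseteq\mathcal{C}^{\perp_0}$ if and only if $\mathcal{C}_i\subseteq\mathcal{C}_j^{\perp_0}$ for every nonzero $(i,j)^{\text{th}}$ entry of $\mathcal{A}\mathcal{A}^T$. Rewriting $\perp_0$ as $\perp$ gives precisely the statement of Theorem \ref{SelforthMP}. Since the full row rank hypothesis on $\mathcal{A}$ is inherited verbatim from Theorem \ref{GaloisSelforthMP}, no further work is needed.

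I do not anticipate any real obstacle here; the only thing to be careful about is to point out explicitly that $\sigma^0$ is the identity and $\perp_0=\perp$, so that a reader who is reviewing only this corollary does not have to re-derive the correspondence between the Galois and Euclidean settings. If desired, one could alternatively give a self-contained proof that mirrors the argument of Lemma \ref{lem_original} with $\ell=0$ throughout, but this would merely duplicate that argument and would not shorten it, so the deduction-from-Theorem-\ref{GaloisSelforthMP} route is preferable.
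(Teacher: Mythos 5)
Your proposal is correct and follows exactly the paper's approach: the paper introduces Theorem \ref{SelforthMP} with the remark that it ``may now be easily obtained from Theorem \ref{GaloisSelforthMP} by letting $\ell=0$'' and gives no further proof. Your careful accounting of the collapses $\sigma^0 = \mathrm{id}$, $\perp_0 = \perp$, and $\sigma^0(\mathcal{A})\mathcal{A}^T = \mathcal{A}\mathcal{A}^T$ is exactly the verification the paper leaves implicit.
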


Theorem \ref{SelforthMP} generalizes \cite[Theorems III.1 and III.4]{JitmanConf} by relaxing the requirement of an invertible $\mathcal{A}$ with diagonal or anti-diagonal $\mathcal{A} \mathcal{A}^T$. This is illustrated in Example \ref{ex_variety}, where a Euclidean self-orthogonal MP code is constructed using a non-square $\mathcal{A}$ whose product $\mathcal{A}\mathcal{A}^T$ is neither diagonal nor anti-diagonal. Moreover, the constituent codes show varying properties\textemdash some are self-orthogonal, some are dual-containing, and others are neither self-orthogonal nor dual-containing.

\begin{example}
\label{ex_variety}
Let
\begin{equation*}
\mathcal{A}= \begin{bmatrix}
 1&1&0&0\\
 0&1&0&0\\
 0&0&1&1
   \end{bmatrix}.
\end{equation*}
Consider any binary MP code $\mathcal{C}=\left[\mathcal{C}_1 \ \mathcal{C}_2 \ \mathcal{C}_3 \right]\mathcal{A}$ with $\mathcal{C}_1=\mathcal{C}_2^\perp$ and a self-orthogonal $\mathcal{C}_2$. Theorem \ref{SelforthMP} implies that $\mathcal{C}$ is Euclidean self-orthogonal. This is because 
\begin{equation*}
\mathcal{A} \mathcal{A}^T= \begin{bmatrix}
 0&1&0\\
 1&1&0\\
 0&0&0
   \end{bmatrix}.
\end{equation*}
It should be noted that $\mathcal{C}$ is Euclidean self-orthogonal, regardless of the choice of $\mathcal{C}_3$. Also, $\mathcal{C}_1$ is dual-containing because $\mathcal{C}_2$ is self-orthogonal. 
\end{example}

Several special cases for the self-orthogonality of MP codes can be obtained as corollaries of Theorem \ref{GaloisSelforthMP_Total}. As appeared in the literature, particular forms of $\mathcal{A}\mathcal{A}^T$ are required in such cases. Some of these cases is mentioned below in the context of Galois dual.

\begin{corollary}
If $\mathcal{C}=\left[\mathcal{C}_1 \ \mathcal{C}_2 \ \cdots \ \mathcal{C}_M\right]\mathcal{A}$ is such that $\sigma^\ell\left(\mathcal{A}\right)\mathcal{A}^T=\mathbf{0}$, then $\mathcal{C}$ is $\ell$-Galois self-orthogonal.
\end{corollary}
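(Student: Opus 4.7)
The plan is to derive this as an immediate corollary of Theorem \ref{GaloisSelforthMP_Total}, which provides a necessary and sufficient condition for $\ell$-Galois self-orthogonality of $\mathcal{C}$ in terms of the matrix $\sigma^\ell(\mathcal{A})\mathcal{A}^T$. Since Theorem \ref{GaloisSelforthMP_Total} is stated for an arbitrary defining matrix $\mathcal{A}$ (it covers both the full row rank case and the rank-deficient case treated in Subsection \ref{SOwithsmallrank}), no case distinction is needed here.

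First, I would write $\sigma^\ell(\mathcal{A})\mathcal{A}^T = [\zeta_{i,j}]$ in the notation of Theorem \ref{GaloisSelforthMP_Total}. The hypothesis of the corollary says that $\zeta_{i,j} = 0$ for all indices $(i,j)$. Consequently, the set of pairs $(i,j)$ with $\zeta_{i,j}\ne 0$ is empty, so the condition
\[
\mathcal{C}_i \subseteq \mathcal{C}_j^{\perp_\ell} \quad \text{for every } (i,j) \text{ with } \zeta_{i,j}\ne 0
\]
is vacuously satisfied. By the ``if'' direction of Theorem \ref{GaloisSelforthMP_Total}, this vacuous satisfaction is sufficient to conclude $\mathcal{C} \subseteq \mathcal{C}^{\perp_\ell}$, i.e., $\mathcal{C}$ is $\ell$-Galois self-orthogonal.

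There is no real obstacle to overcome; the entire content of the corollary is a one-line specialization of the main theorem. The only thing worth pointing out to the reader is that this specialization recovers, as a particularly clean sufficient condition, a situation in which no constraint whatsoever is imposed on the individual constituent codes $\mathcal{C}_1,\dots,\mathcal{C}_M$: once the defining matrix satisfies $\sigma^\ell(\mathcal{A})\mathcal{A}^T=\mathbf{0}$, self-orthogonality of $\mathcal{C}$ is automatic regardless of the choice of the $\mathcal{C}_i$'s.
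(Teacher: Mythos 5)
Your proof is correct. However, the paper's actual proof takes a different, more elementary route that bypasses Theorem~\ref{GaloisSelforthMP_Total} entirely: starting from the generator matrix $\mathbf{G}=\text{diag}\left[\mathbf{G}_1 \ \cdots \ \mathbf{G}_M\right](\mathcal{A}\otimes \mathcal{I}_n)$ from~\eqref{MP_Gen}, one computes $\sigma^\ell(\mathbf{G})\mathbf{G}^T$ directly (using the mixed-product property of the Kronecker product and the block-diagonal structure) and observes that the hypothesis $\sigma^\ell(\mathcal{A})\mathcal{A}^T=\mathbf{0}$ forces $\sigma^\ell(\mathbf{G})\mathbf{G}^T=\mathbf{0}$, which immediately gives $\ell$-Galois self-orthogonality. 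The paper does note in passing that the corollary is also ``clear from'' the self-orthogonality theorem---which is precisely your vacuous-condition argument---but it deliberately presents the generator-matrix computation as an independent, self-contained verification. Your approach is cleaner as a logical specialization and emphasizes the unconditional nature of the result on the constituent codes; the paper's approach is more concrete and does not rely on the machinery built up in Lemmas~\ref{lem_original}--\ref{lem2} and Theorem~\ref{GaloisSelforthMP_Total}, which is valuable as a sanity check of that machinery. Both proofs are valid and neither has a gap.
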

\begin{proof}
This is clear from Theorem \ref{SelforthMP}, although an alternate proof is presented. From \eqref{MP_Gen}, $\mathcal{C}$ has the generator matrix $\mathbf{G}=\text{diag}\left[\mathbf{G}_1 \ \cdots \ \mathbf{G}_M\right] ( \mathcal{A}\otimes \mathcal{I}_n)$. If $\sigma^\ell\left(\mathcal{A}\right)\mathcal{A}^T=\mathbf{0}$, then $\sigma^\ell\left(\mathbf{G}\right)\mathbf{G}^T=\mathbf{0}$, and hence, $\mathcal{C}$ is $\ell$-Galois self-orthogonal.
\end{proof}

\begin{corollary}
Let $\mathcal{C}=\left[\mathcal{C}_1 \ \mathcal{C}_2 \ \cdots \ \mathcal{C}_M\right]\mathcal{A}$ with $\mathcal{A}\in \mathbb{F}_q^{M\times N}$ such that $\sigma^\ell\left(\mathcal{A}\right)\mathcal{A}^T$ is diagonal. Then $\mathcal{C}$ is $\ell$-Galois self-orthogonal if and only if $\mathcal{C}_i$ is $\ell$-Galois self-orthogonal for every nonzero $i^{\text{th}}$ diagonal entry of $\sigma^\ell\left(\mathcal{A}\right)\mathcal{A}^T$.
\end{corollary}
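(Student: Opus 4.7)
The plan is to derive the corollary as an immediate specialization of Theorem \ref{GaloisSelforthMP_Total}, which characterizes $\ell$-Galois self-orthogonality of $\mathcal{C}$ by the condition that $\mathcal{C}_i \subseteq \mathcal{C}_j^{\perp_\ell}$ holds for every position $(i,j)$ where the $(i,j)$-entry $\zeta_{i,j}$ of $\sigma^\ell(\mathcal{A})\mathcal{A}^T$ is nonzero. The corollary imposes the additional hypothesis that this product matrix is diagonal, so the natural first step is to see what this restricts in the quantified condition of the theorem.

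First I would observe that diagonality of $\sigma^\ell(\mathcal{A})\mathcal{A}^T$ means $\zeta_{i,j}=0$ for every $i\neq j$, so the quantifier in Theorem \ref{GaloisSelforthMP_Total} only has content at diagonal positions. Thus the condition collapses to: $\mathcal{C}_i \subseteq \mathcal{C}_i^{\perp_\ell}$ for every $i$ with $\zeta_{i,i}\neq 0$. Recognizing $\mathcal{C}_i\subseteq\mathcal{C}_i^{\perp_\ell}$ as the very definition of $\mathcal{C}_i$ being $\ell$-Galois self-orthogonal yields the stated equivalence in both directions simultaneously.

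There is no substantive obstacle here; the corollary is essentially a rephrasing of Theorem \ref{GaloisSelforthMP_Total} under the diagonality hypothesis. The only minor point to be careful about is that the hypothesis does not require $\mathcal{A}$ to have full row rank, so one must cite Theorem \ref{GaloisSelforthMP_Total} (which covers $\mathcal{A}$ of arbitrary rank via Subsection \ref{SOwithsmallrank}) rather than the full row rank special case Theorem \ref{GaloisSelforthMP}, and no separate treatment of the rank-deficient case is required.
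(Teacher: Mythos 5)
Your proposal is correct and follows exactly the route the paper intends: the paper introduces this block of corollaries by saying they ``can be obtained as corollaries of Theorem \ref{GaloisSelforthMP_Total}'' and gives no further proof, and your argument — diagonality kills all off-diagonal conditions, leaving only $\mathcal{C}_i\subseteq\mathcal{C}_i^{\perp_\ell}$ at the surviving diagonal positions, which is by definition $\ell$-Galois self-orthogonality of $\mathcal{C}_i$ — is precisely that specialization. Your remark about citing the rank-general Theorem \ref{GaloisSelforthMP_Total} rather than the full-row-rank Theorem \ref{GaloisSelforthMP} is apt, since the corollary as stated imposes no rank hypothesis on $\mathcal{A}$.
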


\begin{corollary}
\label{Corr_anti_diag}
Let $\mathcal{C}=\left[\mathcal{C}_1 \ \mathcal{C}_2 \ \cdots \ \mathcal{C}_M\right]\mathcal{A}$ with $\mathcal{A}\in \mathbb{F}_q^{M\times N}$ such that $\sigma^\ell\left(\mathcal{A}\right)\mathcal{A}^T$ is anti-diagonal. Then $\mathcal{C}$ is $\ell$-Galois self-orthogonal if and only if $\mathcal{C}_i \subseteq \mathcal{C}_{M-i+1}^{\perp_\ell}$ for every nonzero $i^{\text{th}}$ anti-diagonal entry of $\sigma^\ell\left(\mathcal{A}\right)\mathcal{A}^T$. 
\end{corollary}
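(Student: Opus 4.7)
The plan is to derive this corollary as a direct specialization of Theorem~\ref{GaloisSelforthMP_Total}, since that theorem already provides the necessary and sufficient condition for $\ell$-Galois self-orthogonality in terms of the entries of $\sigma^\ell(\mathcal{A})\mathcal{A}^T$, without any rank assumption on $\mathcal{A}$. First I would write $\sigma^\ell(\mathcal{A})\mathcal{A}^T=[\zeta_{i,j}]$, which is an $M\times M$ matrix since $\sigma^\ell(\mathcal{A})\in\mathbb{F}_q^{M\times N}$ and $\mathcal{A}^T\in\mathbb{F}_q^{N\times M}$. By Theorem~\ref{GaloisSelforthMP_Total}, $\mathcal{C}$ is $\ell$-Galois self-orthogonal if and only if $\mathcal{C}_i\subseteq \mathcal{C}_j^{\perp_\ell}$ for every index pair $(i,j)$ with $\zeta_{i,j}\ne 0$.

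Next I would read off the support pattern implied by the hypothesis. Saying that $\sigma^\ell(\mathcal{A})\mathcal{A}^T$ is anti-diagonal means $\zeta_{i,j}=0$ whenever $j\ne M-i+1$, so the only indices at which the entry can be nonzero are $(i,M-i+1)$ for $i\in\{1,\ldots,M\}$. Substituting this restriction into the condition provided by Theorem~\ref{GaloisSelforthMP_Total}, the universally quantified condition ``$\mathcal{C}_i\subseteq \mathcal{C}_j^{\perp_\ell}$ for every nonzero $\zeta_{i,j}$'' collapses precisely to ``$\mathcal{C}_i\subseteq \mathcal{C}_{M-i+1}^{\perp_\ell}$ for every $i$ such that the $i$-th anti-diagonal entry $\zeta_{i,M-i+1}$ is nonzero,'' which is exactly the claim.

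There is essentially no obstacle to this proof: it is pure bookkeeping on the support of an anti-diagonal matrix once Theorem~\ref{GaloisSelforthMP_Total} is available. The only thing to be careful about is making sure both directions (necessity and sufficiency) are invoked at once, which is automatic because Theorem~\ref{GaloisSelforthMP_Total} is already stated as an ``if and only if.''
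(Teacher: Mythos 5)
Your proposal is correct and matches the paper's intended route: the paper presents this as an immediate corollary of Theorem~\ref{GaloisSelforthMP_Total}, and your argument is exactly the bookkeeping specialization to the anti-diagonal support pattern. Nothing is missing.
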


We conclude this subsection by showing the existence of Euclidean self-orthogonal MP codes with the best-known parameters. The following example offers a best-known parameters binary MP code in the sense that a binary self-orthogonal code must have an even minimum distance.

\begin{example}
Consider a binary MP code $\mathcal{C}=\left[\mathcal{C}_1 \ \mathcal{C}_2 \right]\mathcal{A}$ with 
\begin{equation*}
\mathcal{A}= \begin{bmatrix}
 1 &1&1&0&1\\
 1&1&0&1&1
  \end{bmatrix}.
\end{equation*}
By Corollary \ref{Corr_anti_diag}, $\mathcal{C}$ is self-orthogonal if and only if $\mathcal{C}_1 \subseteq \mathcal{C}_{2}^\perp$ since $\mathcal{A} \mathcal{A}^T$ is the backward identity matrix. For instance, if $\mathcal{C}_1$ is the repetition code of length $9$ and $\mathcal{C}_2$ is generated by
\begin{equation*}
\mathbf{G}_2= \begin{bmatrix}
 1&0&1&1&0&1&0&1&1\\
 0&1&0&1&1&1&1&0&1
   \end{bmatrix},
\end{equation*}
then $\mathcal{C}$ has the best-known parameters $[45,3,24]$ as a self-orthogonal binary linear code.
\end{example}

\subsection{$\mathcal{A}$ of rank $<M$}
\label{SOwithsmallrank}
This subsection proves Theorem \ref{GaloisSelforthMP_Total} for an MP code with a non-full row rank defining matrix. As discussed after Theorem \ref{dualMPCase2}, partitioning the set of rows of $\mathcal{A}$ to independent sets leads to writing $\mathcal{C}$ as a sum $\sum_{t} \mathcal{C}^{(t)}$, where each $\mathcal{C}^{(t)}$ has a full row rank defining matrix. We begin by examining the condition for the self-orthogonality of $\mathcal{C}$ as conditions on the codes $\mathcal{C}^{(t)}$.

\begin{lemma}
\label{lem1}
Let $\mathcal{C}=\left[\mathcal{C}_1 \ \mathcal{C}_2 \ \cdots \ \mathcal{C}_M\right]\mathcal{A}$, where $\mathcal{C}_1, \mathcal{C}_2, \ldots, \mathcal{C}_M$ are linear codes over $\mathbb{F}_q$. If $\mathcal{C}=\sum_{t} \mathcal{C}^{(t)}$, then $\mathcal{C}$ is $\ell$-Galois self-orthogonal if and only if $\mathcal{C}^{(\iota)}\subseteq \mathcal{C}^{(\gamma) \perp_\ell}$ for every pair $(\iota,\gamma)$ (not necessarily distinct).
\end{lemma}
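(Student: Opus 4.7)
The plan is to unwind the definition of $\ell$-Galois self-orthogonality and exploit the standard duality between sums and intersections of linear codes, transferred to the Galois setting.

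First, I would observe that since $\mathcal{C}=\sum_t \mathcal{C}^{(t)}$, the Galois dual satisfies
\begin{equation*}
\mathcal{C}^{\perp_\ell}=\left(\sum_t \mathcal{C}^{(t)}\right)^{\perp_\ell}=\bigcap_t \mathcal{C}^{(t)\perp_\ell}.
\end{equation*}
This follows directly from the bilinearity of $\langle\cdot,\cdot\rangle_\ell$: a vector is $\ell$-Galois orthogonal to every element of a sum iff it is $\ell$-Galois orthogonal to every element of each summand. Alternatively, one may invoke the property $\mathcal{C}^{\perp_\ell}=\sigma^{e-\ell}\left(\mathcal{C}^\perp\right)$ recalled in Section \ref{Prelim}, combine it with the classical Euclidean identity $\left(\sum_t \mathcal{C}^{(t)}\right)^{\perp}=\bigcap_t \mathcal{C}^{(t)\perp}$, and use the fact that $\sigma^{e-\ell}$ commutes with intersections of linear codes.

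Next, I would translate the self-orthogonality condition $\mathcal{C}\subseteq \mathcal{C}^{\perp_\ell}$ into
\begin{equation*}
\sum_\iota \mathcal{C}^{(\iota)}\subseteq \bigcap_\gamma \mathcal{C}^{(\gamma)\perp_\ell}.
\end{equation*}
Because a sum of subspaces is contained in a given subspace iff each summand is, this is equivalent to $\mathcal{C}^{(\iota)}\subseteq \bigcap_\gamma \mathcal{C}^{(\gamma)\perp_\ell}$ for every $\iota$. Containment in an intersection is in turn equivalent to containment in each set individually, so the condition reduces exactly to $\mathcal{C}^{(\iota)}\subseteq \mathcal{C}^{(\gamma)\perp_\ell}$ for every pair $(\iota,\gamma)$. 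The diagonal case $\iota=\gamma$ encodes the $\ell$-Galois self-orthogonality of each individual summand, which is why the statement explicitly allows $\iota$ and $\gamma$ to coincide.

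The argument is essentially formal once the sum-to-intersection identity for Galois duals is in hand, and that identity is the only step requiring any verification; fortunately it is an immediate consequence of the bilinearity of $\langle\cdot,\cdot\rangle_\ell$ together with the Frobenius action, so no serious obstacle arises.
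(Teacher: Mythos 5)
Your proof is correct and follows essentially the same route as the paper: both reduce $\mathcal{C}\subseteq\mathcal{C}^{\perp_\ell}$ to $\sum_\iota\mathcal{C}^{(\iota)}\subseteq\bigcap_\gamma\mathcal{C}^{(\gamma)\perp_\ell}$ via the sum-to-intersection identity for Galois duals, then split the containment pairwise using the linearity of each $\mathcal{C}^{(\gamma)\perp_\ell}$. The only cosmetic difference is that you verify the identity $\bigl(\sum_t\mathcal{C}^{(t)}\bigr)^{\perp_\ell}=\bigcap_t\mathcal{C}^{(t)\perp_\ell}$ directly, whereas the paper cites Theorem~\ref{dualMPCase2} for it.
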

\begin{proof}
We note from Theorem \ref{dualMPCase2} that $\mathcal{C}$ is self-orthogonal if and only if
\begin{equation*}
\sum_{t} \mathcal{C}^{(t)}=\mathcal{C}\subseteq \mathcal{C}^{\perp_\ell} = \cap_{t} \mathcal{C}^{(t)\perp_\ell}.
\end{equation*}
The linearity of $\mathcal{C}^{(t)\perp_\ell}$ indicates that the latter condition holds if and only if $\mathcal{C}^{(\iota)} \subseteq \mathcal{C}^{(\gamma)\perp_\ell}$ for all pairs $(\iota,\gamma)$.
\end{proof}

We next investigate a necessary and sufficient condition for $\mathcal{C}^{(\iota)} \subseteq \mathcal{C}^{(\gamma)\perp_\ell}$ for some pair $(\iota,\gamma)$ as a consequence of Lemma \ref{lem_original}.

\begin{lemma}
\label{lem2}
Let $\mathcal{C}^{(\iota)}=\left[\mathcal{C}_1^{(\iota)} \ \mathcal{C}_2^{(\iota)} \ \cdots \ \mathcal{C}_{M_\iota}^{(\iota)}\right]\mathcal{A}_\iota$ and $\mathcal{C}^{(\gamma)}=\left[\mathcal{C}_1^{(\gamma)} \ \mathcal{C}_2^{(\gamma)} \ \cdots \ \mathcal{C}_{M_\gamma}^{(\gamma)}\right]\mathcal{A}_\gamma$. Then $\mathcal{C}^{(\iota)} \subseteq \mathcal{C}^{(\gamma)\perp_\ell}$ if and only if $\mathcal{C}_i^{(\iota)} \subseteq \mathcal{C}^{(\gamma)\perp_\ell}_j$ for every nonzero $(i,j)^{\text{th}}$ entry of $\sigma^\ell\left(\mathcal{A}_\iota\right)\mathcal{A}_\gamma^T$.
\end{lemma}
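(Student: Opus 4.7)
The plan is to recognize this lemma as a direct instantiation of Lemma \ref{lem_original}. Recall from the discussion after Theorem \ref{dualMPCase2} that the sub-codes $\mathcal{C}^{(t)}$ arising from the decomposition $\mathcal{C}=\sum_{t}\mathcal{C}^{(t)}$ are constructed precisely by partitioning the rows of $\mathcal{A}$ into maximal linearly independent subsets, so that each defining matrix $\mathcal{A}_{t}\in\mathbb{F}_q^{M_t\times N}$ has full row rank. Thus both $\mathcal{A}_\iota$ and $\mathcal{A}_\gamma$ satisfy the full-row-rank hypothesis required to invoke Lemma \ref{lem_original}.

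With this observation in hand, the proof collapses to a single application: set $\mathcal{C}^{(1)}\leftarrow \mathcal{C}^{(\iota)}$ (with constituent codes $\mathcal{C}_1^{(\iota)},\ldots,\mathcal{C}_{M_\iota}^{(\iota)}$ and defining matrix $\mathcal{A}_\iota$) and $\mathcal{C}^{(2)}\leftarrow \mathcal{C}^{(\gamma)}$ (with constituent codes $\mathcal{C}_1^{(\gamma)},\ldots,\mathcal{C}_{M_\gamma}^{(\gamma)}$ and defining matrix $\mathcal{A}_\gamma$) in Lemma \ref{lem_original}. The conclusion of that lemma then reads: $\mathcal{C}^{(\iota)}\subseteq\mathcal{C}^{(\gamma)\perp_\ell}$ if and only if $\mathcal{C}_i^{(\iota)}\subseteq\mathcal{C}_j^{(\gamma)\perp_\ell}$ for every nonzero $(i,j)^{\text{th}}$ entry of $\sigma^\ell(\mathcal{A}_\iota)\mathcal{A}_\gamma^T$, which is exactly the desired statement.

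There is essentially no obstacle, since this is a clean corollary rather than a new argument. The only point worth stating explicitly in the proof is the justification that the two relevant defining matrices inherit full row rank from the construction of the decomposition $\mathcal{C}=\sum_{t}\mathcal{C}^{(t)}$; once that is noted, Lemma \ref{lem_original} applies verbatim. The role of Lemma \ref{lem2} in the overall proof strategy is simply to rephrase each pairwise containment $\mathcal{C}^{(\iota)}\subseteq\mathcal{C}^{(\gamma)\perp_\ell}$ appearing in Lemma \ref{lem1} as a containment condition at the level of the individual constituent codes, thereby setting up the subsequent step that will combine all such pairwise conditions into the single global criterion of Theorem \ref{GaloisSelforthMP_Total}.
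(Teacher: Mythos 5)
Your proof is correct and coincides with the paper's own argument, which likewise obtains Lemma \ref{lem2} by directly substituting $\mathcal{C}^{(1)}=\mathcal{C}^{(\iota)}$, $\mathcal{C}^{(2)}=\mathcal{C}^{(\gamma)}$, $\mathcal{A}_1=\mathcal{A}_\iota$, $\mathcal{A}_2=\mathcal{A}_\gamma$ into Lemma \ref{lem_original}. Your added remark that the full-row-rank hypothesis on $\mathcal{A}_\iota$ and $\mathcal{A}_\gamma$ is inherited from the decomposition of Theorem \ref{dualMPCase2} is a worthwhile explicit justification that the paper leaves implicit.
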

\begin{proof}
Set $\mathcal{C}^{(1)}=\mathcal{C}^{(\iota)}$, $\mathcal{C}^{(2)}=\mathcal{C}^{(\gamma)}$, $\mathcal{A}_1=\mathcal{A}_\iota$, and $\mathcal{A}_2=\mathcal{A}_\gamma$ in Lemma \ref{lem_original}.
\end{proof}

We are now prepared to prove Theorem \ref{GaloisSelforthMP_Total} for MP codes with any defining matrix.

\begin{proof}[Proof of Theorem \ref{GaloisSelforthMP_Total}]
By Theorem \ref{dualMPCase2}, $\mathcal{C}=\sum_t \mathcal{C}^{(t)}$, where $\mathcal{C}^{(t)}=\left[\mathcal{C}_1^{(t)} \ \mathcal{C}_2^{(t)} \ \cdots \ \mathcal{C}_{M_{t}}^{(t)}\right]\mathcal{A}_{t}$ and $\mathcal{A}_{t}$ is of full row rank. Without loss of generality, we may assume that $\mathcal{A}$ is a one-column block matrix with block entries $\mathcal{A}_{t}$; this can be done by permuting the constituent codes simultaneously with the rows of $\mathcal{A}$. By Lemmas \ref{lem1} and \ref{lem2}, $\mathcal{C}$ is $\ell$-Galois self-orthogonal if and only if $\mathcal{C}^{(\iota)}\subseteq \mathcal{C}^{(\gamma) \perp_\ell}$ for all pairs $(\iota,\gamma)$, that is, if and only if $\mathcal{C}_i^{(\iota)} \subseteq \mathcal{C}_j^{(\gamma)\perp_\ell}$ for every nonzero $(i,j)^{\text{th}}$ entry of $\sigma^\ell\left(\mathcal{A}_\iota\right) \mathcal{A}_\gamma^T$ for all pairs $(\iota,\gamma)$. But the $(\iota,\gamma)^{\text{th}}$ block entry of $\sigma^\ell\left(\mathcal{A}\right)\mathcal{A}^T$ is $\sigma^\ell\left(\mathcal{A}_\iota\right)\mathcal{A}_\gamma^T$. This proves that $\mathcal{C}$ is $\ell$-Galois self-orthogonal if and only if $\mathcal{C}_i^{(\iota)} \subseteq \mathcal{C}_j^{(\gamma\perp_\ell)}$ for every nonzero $(i,j)^{\text{th}}$ entry in each $(\iota,\gamma)$ block entry of $\sigma^\ell\left(\mathcal{A}\right)\mathcal{A}^T$. Equivalently, $\mathcal{C}_i \subseteq \mathcal{C}_j^{\perp_\ell}$ for every nonzero $(i,j)^{\text{th}}$ entry of $ \sigma^\ell\left(\mathcal{A}\right)\mathcal{A}^T$.
\end{proof}

\begin{example}
Let $\theta\in\mathbb{F}_4$ be such that $\theta^2+\theta+1=0$ and consider an MP code $\mathcal{C}=\left[\mathcal{C}_1 \ \mathcal{C}_2  \ \mathcal{C}_3  \ \mathcal{C}_4  \ \mathcal{C}_5 \right]\mathcal{A}$ with 
\begin{equation*}
\mathcal{A}= \begin{bmatrix}
   1  &1&\theta^2\\
 \theta&\theta&  0\\
   1 & 1 & 1\\
 \theta & 1&  0\\
 \theta&\theta^2&\theta^2
   \end{bmatrix}.
\end{equation*}
We aim to select constituent codes such that $\mathcal{C}$ is $1$-Galois self-orthogonal. For this purpose, we have
\begin{equation*}
\sigma\left(\mathcal{A}\right)\mathcal{A}^T= \begin{bmatrix}
   1  &0&\theta&\theta^2&  0\\
   0&  0&  0&\theta&\theta^2\\
 \theta^2&  0&  1&\theta^2&\theta\\
 \theta&\theta^2&\theta & 0&\theta\\
   0&\theta&\theta^2&\theta^2&  1
   \end{bmatrix}.
\end{equation*}
Since $e=2$, $\sigma=\sigma^{-1}$, and hence, $\mathcal{C}_i \subseteq \mathcal{C}_j^{\perp_1}$ if and only if $\mathcal{C}_j \subseteq \mathcal{C}_i^{\perp_1}$. By Theorem \ref{GaloisSelforthMP_Total}, $\mathcal{C}$ is $1$-Galois self-orthogonal if and only if $\mathcal{C}_1 \subseteq \mathcal{C}_1^{\perp_1}$, $\mathcal{C}_1 \subseteq \mathcal{C}_3^{\perp_1}$, $\mathcal{C}_1 \subseteq \mathcal{C}_4^{\perp_1}$, $\mathcal{C}_2 \subseteq \mathcal{C}_4^{\perp_1}$, $\mathcal{C}_2 \subseteq \mathcal{C}_5^{\perp_1}$, $\mathcal{C}_3 \subseteq \mathcal{C}_3^{\perp_1}$, $\mathcal{C}_3 \subseteq \mathcal{C}_4^{\perp_1}$, $\mathcal{C}_3 \subseteq \mathcal{C}_5^{\perp_1}$, $\mathcal{C}_4 \subseteq \mathcal{C}_5^{\perp_1}$, and $\mathcal{C}_5 \subseteq \mathcal{C}_5^{\perp_1}$. We choose the constituents that satisfy these conditions to be the linear codes generated by
\begin{align*}
\mathbf{G}_1&=\begin{bmatrix}
   1&\theta&\theta^2&  1&  0
\end{bmatrix},\\
\mathbf{G}_2&=\begin{bmatrix}
   1 & 0&\theta&\theta & 0
\end{bmatrix},\\
\mathbf{G}_3=\mathbf{G}_5&=\begin{bmatrix}
   1&  0 & 1&\theta^2&\theta
\end{bmatrix},\\
\mathbf{G}_4&=\begin{bmatrix}
   1&\theta^2& \theta&  0&  1
\end{bmatrix}.
\end{align*}
Then $\mathcal{C}$ is $1$-Galois self-orthogonal with the parameters $[15,5,4]$ over $\mathbb{F}_4$. However, $\mathcal{C}^{\perp_1}$ is $1$-Galois dual-containing with the parameters $[15,10,3]$, and hence almost optimal according to \cite{Grassl:codetables}. In the next section, we examine $\ell$-Galois dual-containing MP codes.
\end{example}

\section{Dual-containing MP codes}
\label{dualcontnof MP}
In this section, we examine the conditions in which an MP code is Galois dual-containing. Using the concept that a linear code is dual-containing if and only if its dual is self-orthogonal, we apply the results in Section \ref{selforthgof MP} to derive the dual-containment conditions. Our results are more general than \cite{Cao2020,Cao2024} because we do not require an invertible $\mathcal{A}$, nor do we assume that $\sigma^\ell\left(\mathcal{A}\right)\mathcal{A}^T$ is diagonal or monomial. Instead, we separately treat two distinct cases: one where $\mathcal{A}$ has full row rank and another where it does not. This provides a more comprehensive understanding of the Galois dual-containing properties of MP codes.

\subsection{$\mathcal{A}$ of rank $M$}
Let $q=p^e$ for a prime $p$ and a positive integer $e$. Throughout this subsection, we assume that $\mathcal{C}$ is an MP code with a full row rank defining matrix $\mathcal{A}$. We begin with the necessary and sufficient conditions for $\mathcal{C}$ to be $\ell$-Galois dual-containing for any $0\le \ell <e$. We exploit the fact that $\mathcal{C}$ is $\ell$-Galois dual-containing if and only if $\mathcal{C}^{\perp_\ell}$ is $(e-\ell)$-Galois self-orthogonal. This is because $\mathcal{C}\supseteq \mathcal{C}^{\perp_\ell}$ is equivalent to $\left(\mathcal{C}^{\perp_\ell}\right)^{\perp_{e-\ell}}\supseteq \mathcal{C}^{\perp_\ell}$. Thus we build our result on the self-orthogonality condition presented in Theorem \ref{GaloisSelforthMP}. By setting $\ell=0$, the Euclidean dual-containment condition is easily achieved.

\begin{theorem}
\label{GalDualContMP}
Let $\mathcal{C}=\left[\mathcal{C}_1 \ \mathcal{C}_2 \ \cdots \ \mathcal{C}_M\right]\mathcal{A}$, where $\mathcal{A}\in \mathbb{F}_q^{M\times N}$ has full row rank, $\mathcal{C}_1 , \mathcal{C}_2 , \ldots , \mathcal{C}_M$ are linear codes of length $n$ over $\mathbb{F}_q$, and $q=p^e$. Construct an invertible $\mathcal{B}\in\mathbb{F}_q^{N\times N}$ such that $\mathcal{B}\{1,\ldots,M\}=\mathcal{A}$. For some $0\le \ell <e$, let $\left(\sigma^{\ell}\left(\mathcal{B}\right)\mathcal{B}^T\right)^{-1}=\left[\zeta_{i,j}\right]$. Then $\mathcal{C}$ is $\ell$-Galois dual-containing if and only if all the following conditions are satisfied:
\begin{enumerate}
\item $\zeta_{i,j}=0$ for $i\ge M+1$ and $j\ge M+1$,
\item if $\zeta_{i,j}\ne 0$ with $i\le M$ and $j\ge M+1$, then $\mathcal{C}_i=\mathcal{F}$,
\item if $\zeta_{i,j}\ne 0$ with $i\ge M+1$ and $j\le M$, then $\mathcal{C}_j=\mathcal{F}$,
\item if $\zeta_{i,j}\ne 0$ with $i\le M$ and $j\le M$, then $\mathcal{C}_i^{\perp_\ell} \subseteq \mathcal{C}_j$.
\end{enumerate}
\end{theorem}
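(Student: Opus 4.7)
The plan is to reduce Galois dual-containment to a containment between two Galois duals of $\mathcal{C}$, and then to invoke Lemma \ref{lem_original}. Using property 3 from the preliminaries, $\mathcal{C} = (\mathcal{C}^{\perp_{e-\ell}})^{\perp_\ell}$, so $\mathcal{C}^{\perp_\ell} \subseteq \mathcal{C}$ is equivalent to $\mathcal{C}^{\perp_\ell} \subseteq (\mathcal{C}^{\perp_{e-\ell}})^{\perp_\ell}$, which is exactly the shape required by Lemma \ref{lem_original} with $\mathcal{C}^{(1)} := \mathcal{C}^{\perp_\ell}$ and $\mathcal{C}^{(2)} := \mathcal{C}^{\perp_{e-\ell}}$.

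First, I would use Corollary \ref{CoroGaloisDual} to express both $\mathcal{C}^{(1)}$ and $\mathcal{C}^{(2)}$ as MP codes with invertible $N\times N$ defining matrices (so they are automatically of full row rank, as Lemma \ref{lem_original} demands). Explicitly, $\mathcal{A}_1 = (\sigma^{e-\ell}(\mathcal{B})^{-1})^T$ with constituents $\mathcal{C}_i^{(1)} = \mathcal{C}_i^{\perp_\ell}$ for $1 \le i \le M$ and $\mathcal{C}_i^{(1)} = \mathcal{F}$ for $i \ge M+1$, while $\mathcal{A}_2 = (\sigma^\ell(\mathcal{B})^{-1})^T$ with $\mathcal{C}_j^{(2)} = \mathcal{C}_j^{\perp_{e-\ell}}$ for $1 \le j \le M$ and $\mathcal{C}_j^{(2)} = \mathcal{F}$ for $j \ge M+1$.

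The next step is a short linear-algebra identification of the test matrix in Lemma \ref{lem_original}. Because the Frobenius map commutes with transpose and inverse and $\sigma^e = \mathrm{id}$, one quickly obtains $\sigma^\ell(\mathcal{A}_1) = (\mathcal{B}^{-1})^T$ and $\mathcal{A}_2^T = \sigma^\ell(\mathcal{B})^{-1}$, whence $\sigma^\ell(\mathcal{A}_1)\mathcal{A}_2^T = (\mathcal{B}^T)^{-1}\sigma^\ell(\mathcal{B})^{-1} = (\sigma^\ell(\mathcal{B})\mathcal{B}^T)^{-1} = [\zeta_{i,j}]$, recovering exactly the matrix in the statement.

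It remains to decode the conclusion of Lemma \ref{lem_original}, namely $\mathcal{C}_i^{(1)} \subseteq (\mathcal{C}_j^{(2)})^{\perp_\ell}$ for every nonzero $\zeta_{i,j}$. Using property 3 to rewrite $(\mathcal{C}_j^{\perp_{e-\ell}})^{\perp_\ell} = \mathcal{C}_j$ and noting that $\mathcal{F}^{\perp_\ell} = \mathcal{O}$, I would split into the four index regimes: $i, j \ge M+1$ forces $\mathcal{F} \subseteq \mathcal{O}$, so $\zeta_{i,j}$ must vanish, giving (1); $i \le M$, $j \ge M+1$ forces $\mathcal{C}_i^{\perp_\ell} \subseteq \mathcal{O}$, equivalently $\mathcal{C}_i = \mathcal{F}$, giving (2); $i \ge M+1$, $j \le M$ forces $\mathcal{F} \subseteq \mathcal{C}_j$, i.e.\ $\mathcal{C}_j = \mathcal{F}$, giving (3); and $i, j \le M$ gives exactly $\mathcal{C}_i^{\perp_\ell} \subseteq \mathcal{C}_j$, giving (4). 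The only delicate point is the matrix manipulation verifying $\sigma^\ell(\mathcal{A}_1)\mathcal{A}_2^T = (\sigma^\ell(\mathcal{B})\mathcal{B}^T)^{-1}$; everything else is a mechanical translation of previously established results.
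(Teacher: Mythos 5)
Your proof is correct, and it is essentially the paper's argument with one small but pleasant variation. The paper reduces dual-containment of $\mathcal{C}$ to $(e-\ell)$-Galois self-orthogonality of $\mathcal{C}^{\perp_\ell}$ and then invokes the symmetric Theorem \ref{GaloisSelforthMP}; this produces the test matrix $\sigma^{2(e-\ell)}\left(\sigma^\ell\left(\mathcal{B}\right)\mathcal{B}^T\right)^{-1}$, and the paper must then observe that the outer Frobenius power $\sigma^{2(e-\ell)}$ leaves the set of nonzero positions unchanged. You instead rewrite $\mathcal{C}^{\perp_\ell}\subseteq\mathcal{C}$ as $\mathcal{C}^{\perp_\ell}\subseteq\left(\mathcal{C}^{\perp_{e-\ell}}\right)^{\perp_\ell}$ and apply the asymmetric Lemma \ref{lem_original} directly to the pair $\mathcal{C}^{(1)}=\mathcal{C}^{\perp_\ell}$, $\mathcal{C}^{(2)}=\mathcal{C}^{\perp_{e-\ell}}$, and your computation lands exactly on $\left(\sigma^\ell\left(\mathcal{B}\right)\mathcal{B}^T\right)^{-1}$ with no leftover Frobenius twist. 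Since Theorem \ref{GaloisSelforthMP} is itself a specialization of Lemma \ref{lem_original}, the underlying machinery is identical; what your route buys is a slightly cleaner matrix identity and the elimination of the ``Frobenius preserves zero-patterns'' observation. The decoding of the four index regimes at the end matches the paper's verbatim. One small point worth making explicit if you write this up: to obtain the constituents of $\mathcal{C}^{(2)}=\mathcal{C}^{\perp_{e-\ell}}$ you are applying Corollary \ref{CoroGaloisDual} with $\ell$ replaced by $e-\ell$, which is why $\mathcal{A}_2=\left(\sigma^{\ell}\left(\mathcal{B}\right)^{-1}\right)^T$ and the first $M$ constituents are $\mathcal{C}_j^{\perp_{e-\ell}}$.
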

\begin{proof}
The property $\left(\mathcal{C}^{\perp_\ell}\right)^{\perp_{e-\ell}}=\mathcal{C}$ implies that $\mathcal{C}$ is $\ell$-Galois dual-containing if and only if $\mathcal{C}^{\perp_\ell}$ is $(e-\ell)$-Galois self-orthogonal. By Corollary \ref{CoroGaloisDual}, 
\begin{equation*}
\mathcal{C}^{\perp_\ell}=[\mathcal{C}^{\perp_\ell}_1 \ \mathcal{C}^{\perp_\ell}_2 \ \cdots \ \mathcal{C}^{\perp_\ell}_N]\left(\sigma^{e-\ell}\left(\mathcal{B}\right)^{-1}\right)^T, 
\end{equation*}
where $\mathcal{C}_{M+1}=\cdots=\mathcal{C}_N=\mathcal{O}$. Theorem \ref{GaloisSelforthMP} shows that $\mathcal{C}^{\perp_\ell}$ is $(e-\ell)$-Galois self-orthogonal if and only if 
$$\mathcal{C}_i^{\perp_\ell} \subseteq \left(\mathcal{C}_j^{\perp_\ell}\right)^{\perp_{e-\ell}}=\mathcal{C}_j$$ 
for every nonzero $(i,j)^{\text{th}}$ entry of 
\begin{align*}
\sigma^{e-\ell}\left(\sigma^{e-\ell}\left(\mathcal{B}\right)^{-1}\right)^T  \sigma^{e-\ell}\left(\mathcal{B}\right)^{-1} &= \sigma^{2(e-\ell)}\left(\left(\mathcal{B}^T\right)^{-1}  \sigma^\ell\left(\mathcal{B}\right)^{-1} \right)\\
&=\sigma^{2(e-\ell)} \left(\sigma^\ell\left(\mathcal{B}\right)\mathcal{B}^T\right)^{-1}.
\end{align*}
Since $\sigma$ is an automorphism of $\mathbb{F}_q$, the latter matrix and the matrix $\left(\sigma^\ell\left(\mathcal{B}\right)\mathcal{B}^T\right)^{-1}$ agree in the $(i,j)^{\text{th}}$ positions having nonzero entries. Therefore, $\mathcal{C}$ is $\ell$-Galois dual-containing if and only if $\mathcal{C}_i^{\perp_\ell} \subseteq \mathcal{C}_j$ for every $\zeta_{i,j}\ne 0$. Recalling that $\mathcal{C}_{M+1}=\cdots=\mathcal{C}_N=\mathcal{O}$, then
\begin{enumerate}
\item for $i\ge M+1$ and $j\ge M+1$, the impossibility of $\mathcal{F}= \mathcal{C}_i^{\perp_\ell} \subseteq \mathcal{C}_j=\mathcal{O}$ implies that $\zeta_{i,j}= 0$.
\item if $\zeta_{i,j}\ne 0$ with $i\le M$ and $j\ge M+1$, then $\mathcal{C}_i^{\perp_\ell} \subseteq \mathcal{C}_j=\mathcal{O}$ requires that $\mathcal{C}_i=\mathcal{F}$,
\item if $\zeta_{i,j}\ne 0$ with $i\ge M+1$ and $j\le M$, then $\mathcal{F}=\mathcal{C}_i^{\perp_\ell} \subseteq \mathcal{C}_j$ requires that $\mathcal{C}_j=\mathcal{F}$.
\end{enumerate}
\end{proof}

\begin{example}
Let $\theta\in\mathbb{F}_9$ be such that $\theta^2+2\theta+2=0$ and consider any MP code $\mathcal{C}=\left[\mathcal{C}_1 \ \mathcal{C}_2 \right]\mathcal{A}$, where $\mathcal{A}$ is the NSC matrix
\begin{equation*}
\mathcal{A}= \begin{bmatrix}
 \theta^7 &\theta &\theta^7\\
   2 &  1 &\theta^7
   \end{bmatrix}.
\end{equation*}
One possible choice for an invertible $\mathcal{B}\in\mathbb{F}_9^{3\times 3}$ such that $\mathcal{B}\{1,2\}=\mathcal{A}$ is
\begin{equation*}
\mathcal{B}= \begin{bmatrix}
 \theta^7 &\theta &\theta^7\\
   2 &  1 &\theta^7\\
1&0&0
   \end{bmatrix}.
\end{equation*}
For $\ell=1$, we have
\begin{equation*}
\left(\sigma^{\ell}\left(\mathcal{B}\right)\mathcal{B}^T\right)^{-1}= \begin{bmatrix}
   0& \theta& \theta\\
 \theta^3  & 1 &  0\\
 \theta^3 &  0 &  0
   \end{bmatrix}.
\end{equation*}
Theorem \ref{GalDualContMP} asserts that $\mathcal{C}$ is $1$-Galois dual-containing if and only if $\mathcal{C}_1=\mathcal{F}$ and $\mathcal{C}_2$ is $1$-Galois dual-containing. Consequently, $\mathcal{C}$ has the parameters $[3n, n+k_2, d]$, where $n$ is the length of the constituent codes, $k_2$ is the dimension of $\mathcal{C}_2$, and $d$ is the minimum distance, which is undoubtedly depending on the choice of the constituents.
\end{example}

Theorem \ref{GalDualContMP} has an important significance when the defining matrix is invertible. This assumption reduces the conditions of Theorem \ref{GalDualContMP} while generalizing the results in \cite{Cao2024} by eliminating the need for $\sigma^{\ell}\left(\mathcal{A}\right)\mathcal{A}^T$ to be diagonal or monomial. This is expressed in the following corollary, which is proved by setting $N=M$ in Theorem \ref{GalDualContMP}.

\begin{corollary}
\label{CorrDCGalois}
Let $\mathcal{C}=\left[\mathcal{C}_1 \ \mathcal{C}_2 \ \cdots \ \mathcal{C}_M\right]\mathcal{A}$, where $\mathcal{A}\in \mathbb{F}_q^{M\times M}$ is invertible, $\mathcal{C}_1 , \mathcal{C}_2 , \ldots , \mathcal{C}_M$ are linear codes of length $n$ over $\mathbb{F}_q$, and $q=p^e$. For $0\le \ell <e$, $\mathcal{C}$ is $\ell$-Galois dual-containing if and only if $\mathcal{C}_i^{\perp_\ell} \subseteq \mathcal{C}_j$ for every nonzero $(i,j)^{\text{th}}$ entry of $\left(\sigma^{\ell}\left(\mathcal{A}\right)\mathcal{A}^T\right)^{-1}$. 
\end{corollary}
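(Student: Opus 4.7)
The plan is to derive Corollary \ref{CorrDCGalois} as an immediate specialization of Theorem \ref{GalDualContMP} to the square case $N = M$. When $\mathcal{A}$ is itself an $M \times M$ invertible matrix, the simplest admissible choice for the extension $\mathcal{B}$ in Theorem \ref{GalDualContMP} is $\mathcal{B} = \mathcal{A}$, since the row-preservation requirement $\mathcal{B}\{1,\ldots,M\} = \mathcal{A}$ and the invertibility of $\mathcal{B}$ are both automatic. With this choice the governing matrix reduces from $\left(\sigma^{\ell}(\mathcal{B})\mathcal{B}^T\right)^{-1}$ to $\left(\sigma^{\ell}(\mathcal{A})\mathcal{A}^T\right)^{-1}$, which is exactly the matrix appearing in the statement of the corollary.

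Next, I would observe that when $N = M$ there simply are no indices $i$ or $j$ with $i \geq M+1$ or $j \geq M+1$. Consequently, conditions (1), (2), and (3) of Theorem \ref{GalDualContMP} are vacuously satisfied. Only condition (4) carries content, and it reads: whenever the $(i,j)$-th entry $\zeta_{i,j}$ of $\left(\sigma^{\ell}(\mathcal{A})\mathcal{A}^T\right)^{-1}$ is nonzero (with $i, j \leq M$), one must have $\mathcal{C}_i^{\perp_\ell} \subseteq \mathcal{C}_j$. This is precisely the condition asserted in the corollary, and the equivalence between this condition and $\mathcal{C} \supseteq \mathcal{C}^{\perp_\ell}$ is delivered directly by Theorem \ref{GalDualContMP}.

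I do not anticipate any substantive obstacle, since the argument is a pure specialization. The only point worth verifying carefully is that the choice $\mathcal{B} = \mathcal{A}$ genuinely meets the hypotheses of Theorem \ref{GalDualContMP}; this is immediate from the invertibility of $\mathcal{A}$. In short, the proof reduces to writing one sentence: apply Theorem \ref{GalDualContMP} with $N = M$ and $\mathcal{B} = \mathcal{A}$, note the vacuity of conditions (1)–(3), and read off the remaining condition (4) as the claimed equivalence.
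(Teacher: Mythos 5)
Your proposal is correct and matches the paper's own argument exactly: the paper likewise observes that this corollary "is proved by setting $N=M$ in Theorem \ref{GalDualContMP}," which forces $\mathcal{B}=\mathcal{A}$, renders conditions (1)--(3) vacuous, and leaves condition (4) as the stated equivalence.
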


\begin{example}
Let $\theta\in\mathbb{F}_9$ be such that $\theta^2+2\theta+2=0$ and consider any MP code $\mathcal{C}=\left[\mathcal{C}_1 \ \mathcal{C}_2 \ \mathcal{C}_3 \ \mathcal{C}_4 \right]\mathcal{A}$, where $\mathcal{A}$ is the invertible matrix
\begin{equation*}
\mathcal{A}= \begin{bmatrix}
   0&\theta&  0&  0\\
   1&  1&\theta^3&\theta^2\\
 \theta&  2&\theta^7&  0\\
   1&  0&\theta^2&\theta^2
   \end{bmatrix}.
\end{equation*}
For $\ell=1$, we have
\begin{equation*}
\left(\sigma^{\ell}\left(\mathcal{A}\right)\mathcal{A}^T\right)^{-1}= \begin{bmatrix}
   1&  0&\theta^7&\theta\\
   0&  0&  0&\theta\\
 \theta^5 & 0&  1&\theta^3\\
 \theta^3&\theta^3&\theta & 0
   \end{bmatrix}.
\end{equation*}
Corollary \ref{CorrDCGalois} asserts that $\mathcal{C}$ is $1$-Galois dual-containing if and only if $\mathcal{C}_1^{\perp_1} \subseteq \mathcal{C}_1$, $\mathcal{C}_1^{\perp_1} \subseteq \mathcal{C}_3$, $\mathcal{C}_1^{\perp_1} \subseteq \mathcal{C}_4$, $\mathcal{C}_2^{\perp_1} \subseteq \mathcal{C}_4$, $\mathcal{C}_3^{\perp_1} \subseteq \mathcal{C}_1$, $\mathcal{C}_3^{\perp_1} \subseteq \mathcal{C}_3$, $\mathcal{C}_3^{\perp_1} \subseteq \mathcal{C}_4$, $\mathcal{C}_4^{\perp_1} \subseteq \mathcal{C}_1$, $\mathcal{C}_4^{\perp_1} \subseteq \mathcal{C}_2$, and $\mathcal{C}_4^{\perp_1} \subseteq \mathcal{C}_3$. These conditions can be reduced to $\mathcal{C}_1^{\perp_1} \subseteq \mathcal{C}_1$ if we choose $\mathcal{C}_3=\mathcal{C}_4=\mathcal{F}$. Furthermore, if we choose $\mathcal{C}_1$ to be the $1$-Galois dual-containing code generated by $\mathbf{G}_1$ and choose $\mathcal{C}_2$ to be the code generated by $\mathbf{G}_2$, where
\begin{equation*}
\mathbf{G}_1= \begin{bmatrix}
   1&  0&  0&\theta^3&  1\\
   0&  1 & 0&\theta^6&\theta\\
   0&  0&  1&\theta^7&\theta^7
   \end{bmatrix} \quad \text{and}\quad \mathbf{G}_2= \begin{bmatrix}
   1&  0&  0&  0&\theta^3\\
   0&  1&  0&  0&\theta^2\\
   0&  0&  1&  0&\theta^3\\
   0&  0&  0&  1&\theta^3
   \end{bmatrix},
\end{equation*}
then $\mathcal{C}$ will have the best-known parameters $[20,17,3]$ over $\mathbb{F}_9$ according to \cite{Grassl:codetables}. We notice that $\mathcal{C}$ is $1$-Galois dual-containing even though one of its constituents, namely $\mathcal{C}_2$, is not. This shows how our conditions for dual-containment are superior to \cite[Theorem 2]{Cao2020}, which requires all constituents to be dual-containing. 
\end{example}

Another important special case of Theorem \ref{GalDualContMP} is when $\ell=0$. It provides the necessary and sufficient conditions for an MP code to be Euclidean dual-containing. This can be formulated as follows.

\begin{corollary}
\label{dualContMP}
Let $\mathcal{C}=\left[\mathcal{C}_1 \ \mathcal{C}_2 \ \cdots \ \mathcal{C}_M\right]\mathcal{A}$, where $\mathcal{A}\in \mathbb{F}_q^{M\times N}$ has full row rank and $\mathcal{C}_1 , \mathcal{C}_2 , \ldots , \mathcal{C}_M$ are linear codes of length $n$ over $\mathbb{F}_q$. Construct an invertible $\mathcal{B}\in\mathbb{F}_q^{N\times N}$ such that $\mathcal{B}\{1,\ldots,M\}=\mathcal{A}$. Let $\left(\mathcal{B} \mathcal{B}^T\right)^{-1}=\left[\zeta_{i,j}\right]$. Then $\mathcal{C}$ is Euclidean dual-containing if and only if all the following conditions are satisfied:
\begin{enumerate}
\item $\zeta_{i,j}=0$ for $i\ge M+1$ and $j\ge M+1$,
\item if $\zeta_{i,j}\ne 0$ with $i\le M$ and $j\ge M+1$, then $\mathcal{C}_i=\mathcal{F}$,
\item if $\zeta_{i,j}\ne 0$ with $i\le M$ and $j\le M$, then $\mathcal{C}_i^{\perp} \subseteq \mathcal{C}_j$.
\end{enumerate}
\end{corollary}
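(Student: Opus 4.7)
The plan is to derive this corollary directly as a specialization of Theorem \ref{GalDualContMP} to the case $\ell = 0$. When $\ell = 0$, the Frobenius power $\sigma^\ell$ is the identity map on $\mathbb{F}_q$, so $\sigma^\ell(\mathcal{B}) = \mathcal{B}$, the Galois dual coincides with the Euclidean dual, and the matrix $\left(\sigma^\ell(\mathcal{B})\mathcal{B}^T\right)^{-1}$ appearing in Theorem \ref{GalDualContMP} becomes simply $\left(\mathcal{B}\mathcal{B}^T\right)^{-1}$. Conditions 1, 2, and 4 of Theorem \ref{GalDualContMP} then translate verbatim into conditions 1, 2, and 3 of the corollary.

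The only thing left to explain is why condition 3 of Theorem \ref{GalDualContMP} (which requires $\mathcal{C}_j = \mathcal{F}$ whenever $\zeta_{i,j} \ne 0$ with $i \ge M+1$ and $j \le M$) does not need to be listed separately in the corollary. The key observation is that $\mathcal{B}\mathcal{B}^T$ is a symmetric matrix, and the inverse of any symmetric invertible matrix is again symmetric. Hence the entries of $\left[\zeta_{i,j}\right] = \left(\mathcal{B}\mathcal{B}^T\right)^{-1}$ satisfy $\zeta_{i,j} = \zeta_{j,i}$ for all $i,j$. Under this symmetry, the index pairs $(i,j)$ with $i \ge M+1$, $j \le M$ and nonzero $\zeta_{i,j}$ are in bijection, via transposition, with the pairs $(j,i)$ with $j \le M$, $i \ge M+1$ and nonzero $\zeta_{j,i}$. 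Consequently, condition 3 of Theorem \ref{GalDualContMP} is logically equivalent to condition 2 of the same theorem, so it is subsumed by condition 2 of the corollary and may be omitted.

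I do not foresee any real obstacle; the argument is a one-line specialization together with the symmetry remark. The only place to be careful is to ensure that ``symmetric implies symmetric inverse'' is invoked correctly, which is immediate from $\left(\mathcal{B}\mathcal{B}^T\right)^{-T} = \left(\left(\mathcal{B}\mathcal{B}^T\right)^T\right)^{-1} = \left(\mathcal{B}\mathcal{B}^T\right)^{-1}$.
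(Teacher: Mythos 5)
Your proof is correct and follows the same route as the paper's: specialize Theorem \ref{GalDualContMP} to $\ell=0$ and then observe that the symmetry of $\left(\mathcal{B}\mathcal{B}^T\right)^{-1}$ makes conditions 2 and 3 of that theorem equivalent, so one of them can be dropped. Your write-up merely spells out the symmetry argument in more detail than the paper does.
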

\begin{proof}
By setting $\ell=0$ in Theorem \ref{GalDualContMP}, we notice that the second and third conditions of Theorem \ref{GalDualContMP} are the same due to the symmetry of $\left(\mathcal{B} \mathcal{B}^T\right)^{-1}$, and therefore one of them can be omitted.
\end{proof}

When designing a dual-containing MP code with a given defining matrix, the second condition of Corollary \ref{dualContMP} forces some constituents to be the whole space code, which makes meeting the third condition considerably easier. We demonstrate this with the following example.

\begin{example}
Consider any MP code $\mathcal{C}=\left[\mathcal{C}_1 \ \mathcal{C}_2 \ \mathcal{C}_3\right]\mathcal{A}$ over $\mathbb{F}_5$, where $\mathcal{A}$ is the NSC matrix
\begin{equation*}
\mathcal{A}=\begin{bmatrix}
 1&1&2&2\\
 0&4&1&4\\
 1&4&2&3
\end{bmatrix}.
\end{equation*}
To design Euclidean dual-containing $\mathcal{C}$ using Corollary \ref{dualContMP}, we select an invertible $\mathcal{B}$ and compute $\left(\mathcal{B} \mathcal{B}^T\right)^{-1}$ as follows:
\begin{equation*}
\mathcal{B}=\begin{bmatrix}
 1&1&2&2\\
 0&4&1&4\\
 1&4&2&3\\
 1&0&0&0
\end{bmatrix} \quad \text{ and } \quad \left(\mathcal{B} \mathcal{B}^T\right)^{-1}=\begin{bmatrix}
 2&4&3&0\\
 4&0&1&0\\
 3&1&1&1\\
 0&0&1&0
\end{bmatrix}.
\end{equation*}
Then $\mathcal{C}$ is dual-containing if and only if $\mathcal{C}_3=\mathcal{F}$, $\mathcal{C}_1^{\perp} \subseteq \mathcal{C}_2$, and $\mathcal{C}_1$ be dual-containing. One possibility for the generators of constituents meeting these conditions are 
\begin{equation*}
\mathbf{G}_1=\begin{bmatrix}
 1&0&0&1&2\\
 0&1&0&3&2\\
 0&0&1&2&1
\end{bmatrix}\quad \text{and}
\quad
\mathbf{G}_2=\begin{bmatrix}
 1&0&0&1&3\\
 0&1&0&3&4\\
 0&0&1&2&0
\end{bmatrix}.
\end{equation*}
Such a choice offers a Euclidean dual-containing MP code with the parameters $[20,11,4]$ over $\mathbb{F}_5$.
\end{example}

We conclude this subsection with another special case, Euclidean dual-containing MP code with an invertible defining matrix. This arises as a special case of Corollary \ref{CorrDCGalois} when $\ell=0$, or Corollary \ref{dualContMP} when $N=M$, and therefore $\mathcal{B}=\mathcal{A}$. In this latter case, the three conditions of Corollary \ref{dualContMP} are reduced to only the last one. 

\begin{corollary}
\label{CorrDCa}
Let $\mathcal{C}=\left[\mathcal{C}_1 \ \mathcal{C}_2 \ \cdots \ \mathcal{C}_M\right]\mathcal{A}$, where $\mathcal{A}\in \mathbb{F}_q^{M\times M}$ is invertible and $\mathcal{C}_1 , \mathcal{C}_2 , \ldots , \mathcal{C}_M$ are linear codes of length $n$ over $\mathbb{F}_q$. Then $\mathcal{C}$ is Euclidean dual-containing if and only if $\mathcal{C}_i^{\perp} \subseteq \mathcal{C}_j$ for every nonzero $(i,j)^{\text{th}}$ entry of $\left(\mathcal{A} \mathcal{A}^T\right)^{-1}$.
\end{corollary}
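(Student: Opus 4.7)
The plan is to obtain this statement as a direct specialization of Corollary \ref{dualContMP} (equivalently, of Corollary \ref{CorrDCGalois} at $\ell=0$). Since $\mathcal{A}\in\mathbb{F}_q^{M\times M}$ is already square and invertible, the auxiliary matrix $\mathcal{B}$ in Corollary \ref{dualContMP} may be taken to be $\mathcal{A}$ itself, so that $N=M$ and no padding of the constituent list with zero codes is required.

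First I would invoke Corollary \ref{dualContMP} with $N=M$ and $\mathcal{B}=\mathcal{A}$, so that $\left(\mathcal{B}\mathcal{B}^T\right)^{-1}=\left(\mathcal{A}\mathcal{A}^T\right)^{-1}=[\zeta_{i,j}]$. Because there are no indices $i$ or $j$ in the range $M+1,\ldots,N$, the first two items of that corollary (which concern entries $\zeta_{i,j}$ with $i\ge M+1$ or $j\ge M+1$) are vacuously satisfied. Only the third condition survives, namely: $\mathcal{C}_i^{\perp}\subseteq\mathcal{C}_j$ whenever $\zeta_{i,j}\ne 0$ with $1\le i,j\le M$. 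This is exactly the assertion of Corollary \ref{CorrDCa}.

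Alternatively, I would note that the same conclusion is visible from Corollary \ref{CorrDCGalois} by setting $\ell=0$, since $\sigma^0$ is the identity automorphism, giving $\sigma^\ell(\mathcal{A})\mathcal{A}^T=\mathcal{A}\mathcal{A}^T$, and the $\ell$-Galois dual coincides with the Euclidean dual. Either route makes the argument essentially a bookkeeping step on top of the earlier work; no genuine obstacle arises because the hard content has already been absorbed into Theorems \ref{dualMP}, \ref{GaloisSelforthMP}, and \ref{GalDualContMP}. The only mild care point is to record that the symmetry of $\left(\mathcal{A}\mathcal{A}^T\right)^{-1}$ is consistent with (but not needed for) the statement: the condition is imposed on every nonzero entry, so pairs $(i,j)$ and $(j,i)$ produce compatible inclusions automatically.
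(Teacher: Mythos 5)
Your proposal is correct and matches the paper's own argument exactly: the paper also obtains this corollary by specializing Corollary \ref{CorrDCGalois} at $\ell=0$, or equivalently Corollary \ref{dualContMP} with $N=M$ so that $\mathcal{B}=\mathcal{A}$ and the first two conditions become vacuous. The remark about symmetry of $\left(\mathcal{A}\mathcal{A}^T\right)^{-1}$ being consistent but not needed is a fine observation, though the paper does not make it here.
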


Corollary \ref{CorrDCa} seems to be very restrictive, however it generalizes some results in the literature, such as \cite[Theorem 3.4]{Cao2024} and \cite[Theorem 1]{Cao2020}. To see this, assume $\mathcal{A}\mathcal{A}^T$ is monomial; that is, if $\mathcal{A}\mathcal{A}^T=\left[\eta_{i,j}\right]$, then, for each $1\le j\le M$, there exists $1\le i_j \le M$ such that $\eta_{i_j,j}\ne 0$ but $\eta_{i,j}=0$ for all $i\ne i_j$. Let $\left(\mathcal{A}\mathcal{A}^T\right)^{-1}=\left[\zeta_{i,j}\right]$. Since $\mathcal{A}\mathcal{A}^T$ is symmetric, we observe that $\left(\mathcal{A}\mathcal{A}^T\right)^{-1}$ is not only monomial, but also $\zeta_{i,j}\ne 0$ if and only if $\eta_{i,j}\ne 0$. Corollary \ref{CorrDCa} states that $\mathcal{C}$ is Euclidean dual-containing if and only if $\mathcal{C}_{i_j}^{\perp} \subseteq \mathcal{C}_j$ for $1\le j\le M$.

The following example utilizes Corollary \ref{CorrDCa} and a computer search to present a Euclidean dual-containing MP code with the best-known parameters.

\begin{example}
Let $\theta\in\mathbb{F}_8$ be such that $\theta^3+\theta+1=0$ and let
\begin{equation*}
\mathcal{A}=\begin{bmatrix}
 \theta^4&\theta&\theta^6&\theta^6 &\theta^2\\
   0&\theta^6&\theta^5&\theta^2&\theta^4\\
 \theta^5& 0&  0&\theta^2&\theta^6\\
 \theta^2&\theta^6&\theta^6&\theta^6&\theta\\
   1&\theta&\theta&\theta^2&  0
\end{bmatrix}, \quad
\mathbf{G}_3=\mathbf{G}_4=\begin{bmatrix}
   1 & 0 & 0 & 0&\theta^5\\
   0&  1 & 0 & 0&\theta^3\\
   0 & 0 & 1 & 0&\theta^4\\
   0 & 0 & 0 & 1&\theta^3
\end{bmatrix}, \quad
\mathbf{G}_5=\begin{bmatrix}
   1 & 0 & 0 & 0&\theta^2\\
   0  &1 & 0 & 0&\theta\\
   0 & 0 & 1 & 0&\theta^3\\
   0 & 0 & 0 & 1&\theta^6
\end{bmatrix}.
\end{equation*}
Consider the MP code $\mathcal{C}=\left[\mathcal{C}_1 \ \mathcal{C}_2 \ \mathcal{C}_3 \ \mathcal{C}_4 \ \mathcal{C}_5 \right]\mathcal{A}$ over $\mathbb{F}_8$, where $\mathcal{C}_1=\mathcal{C}_2=\mathcal{F}$ and $\mathcal{C}_i$ is the linear code over $\mathbb{F}_8$ generated by $\mathbf{G}_i$ for $i=3, 4, 5$. According to \cite{Grassl:codetables}, $\mathcal{C}$ has the best-known parameters $[25,22,3]$ over $\mathbb{F}_8$. Moreover, $\mathcal{C}$ is Euclidean dual-containing by Corollary \ref{CorrDCa}. This is because
\begin{equation*}
\left(\mathcal{A} \mathcal{A}^T\right)^{-1}=\begin{bmatrix}
 \theta&\theta^6&\theta^3&\theta^3&\theta^2\\
 \theta^6&\theta^2&  1&\theta^5&\theta^5\\
 \theta^3&  1&\theta^4&\theta&  0\\
 \theta^3&\theta^5&\theta&\theta^5&  0\\
 \theta^2&\theta^5&  0 & 0 & 0
\end{bmatrix}
\end{equation*}
and $\mathcal{C}_3$ is dual-containing. We remark that $\mathcal{C}$ is dual-containing, but one of its constituents, namely $\mathcal{C}_5$, is not.
\end{example}

\subsection{$\mathcal{A}$ of rank $<M$}
In this subsection, we aim to present a sufficient condition under which an MP code $\mathcal{C}$ with a non-full row rank defining matrix is Galois dual-containing. We utilize Theorem \ref{dualMPCase2} to write $\mathcal{C}$ as a sum of some MP codes $\mathcal{C}^{(t)}$ with full row rank defining matrices. The following lemma shows that it is sufficient for one $\mathcal{C}^{(\gamma)}$ to contain the Galois dual of another $\mathcal{C}^{(\iota)}$ to ensure that $\mathcal{C}$ is Galois dual-containing:

\begin{lemma}
\label{lem3}
Let $\mathcal{C}=\left[\mathcal{C}_1 \ \mathcal{C}_2 \ \cdots \ \mathcal{C}_M\right]\mathcal{A}$, where $\mathcal{C}_1 , \mathcal{C}_2 , \ldots , \mathcal{C}_M$ are linear codes of length $n$ over $\mathbb{F}_q$, $q=p^e$, and $\mathcal{A}\in \mathbb{F}_q^{M\times N}$. As in Theorem \ref{dualMPCase2}, write $\mathcal{C}$ as a sum $\sum_{t} \mathcal{C}^{(t)}$, where $\mathcal{C}^{(t)}$ is MP with a full row rank defining matrix $\mathcal{A}_t\in\mathbb{F}_q^{M_t \times N}$. If $\mathcal{C}^{(\iota) \perp_\ell}\subseteq \mathcal{C}^{(\gamma)}$ for some pair $(\iota,\gamma)$, then $\mathcal{C}$ is $\ell$-Galois dual-containing.
\end{lemma}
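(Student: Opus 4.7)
The plan is to establish the inclusion $\mathcal{C}^{\perp_\ell}\subseteq\mathcal{C}$ by exhibiting a short chain of three set inclusions, each justified by a single already-available fact. The statement is essentially a soft consequence of the decomposition theorem for $\mathcal{C}$ together with basic properties of sums and intersections, so I do not expect a significant obstacle; the work is in lining up the right invocations.

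First, I would invoke Theorem \ref{dualMPCase2} applied iteratively to the decomposition $\mathcal{C}=\sum_t \mathcal{C}^{(t)}$ to obtain
\begin{equation*}
\mathcal{C}^{\perp_\ell}=\bigcap_{t}\mathcal{C}^{(t)\,\perp_\ell}.
\end{equation*}
Since an intersection is contained in any of its terms, this immediately yields $\mathcal{C}^{\perp_\ell}\subseteq\mathcal{C}^{(\iota)\,\perp_\ell}$ for the particular index $\iota$ singled out in the hypothesis.

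Next, I would apply the hypothesis $\mathcal{C}^{(\iota)\,\perp_\ell}\subseteq\mathcal{C}^{(\gamma)}$ to pass from $\mathcal{C}^{(\iota)\,\perp_\ell}$ into the summand $\mathcal{C}^{(\gamma)}$. Finally, because any summand of a code sum sits inside the whole sum, we have $\mathcal{C}^{(\gamma)}\subseteq\sum_{t}\mathcal{C}^{(t)}=\mathcal{C}$. Concatenating the three inclusions
\begin{equation*}
\mathcal{C}^{\perp_\ell}\ \subseteq\ \mathcal{C}^{(\iota)\,\perp_\ell}\ \subseteq\ \mathcal{C}^{(\gamma)}\ \subseteq\ \mathcal{C}
\end{equation*}
gives $\mathcal{C}^{\perp_\ell}\subseteq\mathcal{C}$, which is the definition of $\ell$-Galois dual-containment.

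There is no real obstacle here; the only subtlety worth flagging is that Theorem \ref{dualMPCase2} is stated for a two-part partition, so one must note that repeated application handles the general finite decomposition $\mathcal{C}=\sum_{t}\mathcal{C}^{(t)}$ and delivers the intersection formula over all $t$. Beyond that, the argument is a direct logical chain and does not require any computation involving $\mathcal{A}$, the submatrices $\mathcal{A}_t$, or the Frobenius $\sigma^\ell$.
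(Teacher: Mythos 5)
Your proof is correct and is essentially identical to the paper's: the paper writes the same three-link chain $\mathcal{C}=\sum_t \mathcal{C}^{(t)}\supseteq \mathcal{C}^{(\gamma)}\supseteq \mathcal{C}^{(\iota)\perp_\ell} \supseteq \cap_t \mathcal{C}^{(t)\perp_\ell}= \mathcal{C}^{\perp_\ell}$, merely in the reverse direction. Your remark about iterating Theorem \ref{dualMPCase2} to obtain the full intersection formula is a reasonable point of care, though the paper already established that observation in the discussion following Theorem \ref{dualMPCase2}.
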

\begin{proof}
Assume that $\mathcal{C}^{(\iota) \perp_\ell}\subseteq \mathcal{C}^{(\gamma)}$ for some pair $(\iota,\gamma)$. Then 
\begin{equation*}
\mathcal{C}=\sum_t \mathcal{C}^{(t)}\supseteq \mathcal{C}^{(\gamma)}\supseteq \mathcal{C}^{(\iota) \perp_\ell} \supseteq  \cap_t \mathcal{C}^{(t)\perp_\ell}= \mathcal{C}^{\perp_\ell}.
\end{equation*}
\end{proof}

Lemma \ref{lem3} establishes a sufficient condition for $\mathcal{C}$ to be $\ell$-Galois dual-containing, and this condition can be transferred to the constituents by means of Lemma \ref{lem_original}. Precisely, we view $\mathcal{C}^{(\iota) \perp_\ell}\subseteq \mathcal{C}^{(\gamma)}$ as $\mathcal{C}^{(1)} \subseteq \mathcal{C}^{(2)\perp_\ell}$, where $\mathcal{C}^{(1)}=\mathcal{C}^{(\iota) \perp_\ell}$ and $\mathcal{C}^{(2)}=\mathcal{C}^{(\gamma)\perp_{e-\ell}}$, then we apply Lemma \ref{lem_original}.

\begin{lemma}
\label{lem4}
Let $\mathcal{C}^{(\iota)}=\left[\mathcal{C}_1^{(\iota)} \ \mathcal{C}_2^{(\iota)} \ \cdots \ \mathcal{C}_{M_\iota}^{(\iota)}\right]\mathcal{A}_\iota$ and $\mathcal{C}^{(\gamma)}=\left[\mathcal{C}_1^{(\gamma)} \ \mathcal{C}_2^{(\gamma)} \ \cdots \ \mathcal{C}_{M_\gamma}^{(\gamma)}\right]\mathcal{A}_\gamma$, where $\mathcal{A}_\iota\in\mathbb{F}_q^{M_\iota \times N}$ and $\mathcal{A}_\gamma\in\mathbb{F}_q^{M_\gamma \times N}$ have full row rank. Construct invertible matrices $\mathcal{B}_\iota, \mathcal{B}_\gamma\in\mathbb{F}_q^{N\times N}$ such that $\mathcal{B}_\iota \{1,\ldots,M_\iota\}=\mathcal{A}_\iota$ and $\mathcal{B}_\gamma \{1,\ldots,M_\gamma\}=\mathcal{A}_\gamma$. For some $0\le \ell <e$, let $\left(\sigma^{\ell}\left(\mathcal{B}_\gamma\right)\mathcal{B}_\iota^T\right)^{-1}=\left[\zeta_{i,j}\right]$. Then $\mathcal{C}^{(\iota) \perp_\ell}\subseteq \mathcal{C}^{(\gamma)}$ if and only if  all the following conditions are satisfied:
\begin{enumerate}
\item $\zeta_{i,j}=0$ for $i\ge M_\iota+1$ and $j\ge M_\gamma+1$,
\item if $\zeta_{i,j}\ne 0$ with $i\le M_\iota$ and $j\ge M_\gamma+1$, then $\mathcal{C}_i^{(\iota)}=\mathcal{F}$,
\item if $\zeta_{i,j}\ne 0$ with $i\ge M_\iota+1$ and $j\le M_\gamma$, then $\mathcal{C}_j^{(\gamma)}=\mathcal{F}$,
\item if $\zeta_{i,j}\ne 0$ with $i\le M_\iota$ and $j\le M_\gamma$, then $\mathcal{C}_i^{(\iota)\perp_\ell}  \subseteq \mathcal{C}_j^{(\gamma)}$.
\end{enumerate}
\end{lemma}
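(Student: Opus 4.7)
The plan is to reduce the claim to Lemma \ref{lem_original} by interpreting $\mathcal{C}^{(\iota)\perp_\ell}\subseteq \mathcal{C}^{(\gamma)}$ as an inclusion between two MP codes with invertible defining matrices. Using property 3 of the Galois dual, rewrite $\mathcal{C}^{(\gamma)}=(\mathcal{C}^{(\gamma)\perp_{e-\ell}})^{\perp_\ell}$, so the desired inclusion becomes $\mathcal{C}^{(1)}\subseteq \mathcal{C}^{(2)\perp_\ell}$ with $\mathcal{C}^{(1)}:=\mathcal{C}^{(\iota)\perp_\ell}$ and $\mathcal{C}^{(2)}:=\mathcal{C}^{(\gamma)\perp_{e-\ell}}$. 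By Corollary \ref{CoroGaloisDual}, $\mathcal{C}^{(1)}$ is the MP code with constituents $\mathcal{C}_1^{(\iota)\perp_\ell},\ldots,\mathcal{C}_{M_\iota}^{(\iota)\perp_\ell}$ padded by $N-M_\iota$ copies of $\mathcal{F}$ and defining matrix $\mathcal{A}_1':=(\sigma^{e-\ell}(\mathcal{B}_\iota)^{-1})^T$; analogously $\mathcal{C}^{(2)}$ has constituents $\mathcal{C}_1^{(\gamma)\perp_{e-\ell}},\ldots,\mathcal{C}_{M_\gamma}^{(\gamma)\perp_{e-\ell}}$ padded with $\mathcal{F}$'s and defining matrix $\mathcal{A}_2':=(\sigma^\ell(\mathcal{B}_\gamma)^{-1})^T$. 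Since $\mathcal{A}_1'$ and $\mathcal{A}_2'$ are invertible $N\times N$ matrices, both have full row rank, which makes Lemma \ref{lem_original} directly applicable.

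Lemma \ref{lem_original} then yields that $\mathcal{C}^{(1)}\subseteq \mathcal{C}^{(2)\perp_\ell}$ if and only if $\mathcal{C}_i^{(1)}\subseteq \mathcal{C}_j^{(2)\perp_\ell}$ for every nonzero $(i,j)$ entry of $\sigma^\ell(\mathcal{A}_1')\mathcal{A}_2'^T$. Since $\sigma^e$ is the identity on $\mathbb{F}_q$ and the entrywise action of $\sigma$ commutes with transposition and inversion, a short calculation gives
\begin{equation*}
\sigma^\ell(\mathcal{A}_1')\,\mathcal{A}_2'^{T}=(\mathcal{B}_\iota^T)^{-1}\sigma^\ell(\mathcal{B}_\gamma)^{-1}=\bigl(\sigma^\ell(\mathcal{B}_\gamma)\mathcal{B}_\iota^T\bigr)^{-1}=[\zeta_{i,j}],
\end{equation*}
matching the matrix in the statement. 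Moreover, property 3 gives $\mathcal{C}_j^{(2)\perp_\ell}=\mathcal{C}_j^{(\gamma)}$ for $j\le M_\gamma$ and $\mathcal{F}^{\perp_\ell}=\mathcal{O}$ for $j\ge M_\gamma+1$, and similarly $\mathcal{C}_i^{(1)}=\mathcal{C}_i^{(\iota)\perp_\ell}$ for $i\le M_\iota$ and $\mathcal{C}_i^{(1)}=\mathcal{F}$ for $i\ge M_\iota+1$.

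The four conditions are then read off by case-splitting on the index ranges: if $i\ge M_\iota+1$ and $j\ge M_\gamma+1$, the required $\mathcal{F}\subseteq \mathcal{O}$ is impossible, forcing $\zeta_{i,j}=0$, which is (1); if $i\le M_\iota$ and $j\ge M_\gamma+1$, the required $\mathcal{C}_i^{(\iota)\perp_\ell}\subseteq \mathcal{O}$ is equivalent to $\mathcal{C}_i^{(\iota)}=\mathcal{F}$, giving (2); if $i\ge M_\iota+1$ and $j\le M_\gamma$, the required $\mathcal{F}\subseteq \mathcal{C}_j^{(\gamma)}$ forces $\mathcal{C}_j^{(\gamma)}=\mathcal{F}$, giving (3); and the remaining case produces (4) directly. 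The main obstacle I expect is the bookkeeping around the $\mathcal{F}$-padding together with the verification that $\sigma^\ell(\mathcal{A}_1')\mathcal{A}_2'^T$ simplifies to exactly the matrix $(\sigma^\ell(\mathcal{B}_\gamma)\mathcal{B}_\iota^T)^{-1}$ in the statement; once these are settled the four-case split is essentially mechanical, and the if-and-only-if follows because Lemma \ref{lem_original} is already an equivalence.
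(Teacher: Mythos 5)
Your proposal is correct and follows essentially the same route as the paper: both rewrite $\mathcal{C}^{(\iota)\perp_\ell}\subseteq\mathcal{C}^{(\gamma)}$ as $\mathcal{C}^{(1)}\subseteq\mathcal{C}^{(2)\perp_\ell}$ with $\mathcal{C}^{(1)}=\mathcal{C}^{(\iota)\perp_\ell}$ and $\mathcal{C}^{(2)}=\mathcal{C}^{(\gamma)\perp_{e-\ell}}$, expand these via Corollary \ref{CoroGaloisDual}, apply Lemma \ref{lem_original}, simplify $\sigma^\ell(\mathcal{A}_1')\mathcal{A}_2'^T$ to $(\sigma^\ell(\mathcal{B}_\gamma)\mathcal{B}_\iota^T)^{-1}$, and read off the four conditions by case-splitting on the padded index ranges. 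Your calculation and case analysis match the paper's proof in every detail.
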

\begin{proof}
By Corollary \ref{CoroGaloisDual}, 
\begin{equation*}
\begin{split}
\mathcal{C}^{(\iota) \perp_\ell}&=[\mathcal{C}^{(\iota) \perp_\ell}_1 \ \mathcal{C}^{(\iota) \perp_\ell}_2 \ \cdots \ \mathcal{C}^{(\iota) \perp_\ell}_N ]\left(\sigma^{e-\ell}\left(\mathcal{B}_\iota\right)^{-1}\right)^T,\\
\mathcal{C}^{(\gamma) \perp_{e-\ell}}&=[\mathcal{C}^{(\gamma) \perp_{e-\ell}}_1 \ \mathcal{C}^{(\gamma) \perp_{e-\ell}}_2 \ \cdots \ \mathcal{C}^{(\gamma) \perp_{e-\ell}}_N ]\left(\sigma^{\ell}\left(\mathcal{B}_\gamma\right)^{-1}\right)^T,
\end{split}
\end{equation*}
where $\mathcal{C}_{M_\iota+1}^{(\iota)}=\cdots=\mathcal{C}_{N}^{(\iota)}=\mathcal{C}_{M_\gamma+1}^{(\gamma)}=\cdots=\mathcal{C}_{N}^{(\gamma)}=\mathcal{O}$. Lemma \ref{lem_original} with $\mathcal{C}^{(1)}=\mathcal{C}^{(\iota) \perp_\ell}$, $\mathcal{C}^{(2)}=\mathcal{C}^{(\gamma) \perp_{e-\ell}}$, $\mathcal{A}_1=\left(\sigma^{e-\ell}\left(\mathcal{B}_\iota\right)^{-1}\right)^T$, and $\mathcal{A}_2=\left(\sigma^{\ell}\left(\mathcal{B}_\gamma\right)^{-1}\right)^T$ asserts that $\mathcal{C}^{(\iota) \perp_\ell}\subseteq \mathcal{C}^{(\gamma)}$ if and only if $\mathcal{C}_i^{(\iota) \perp_\ell}\subseteq \mathcal{C}_j^{(\gamma)}$ for every nonzero $(i,j)^{\text{th}}$ entry of 
$$\sigma^{\ell}\left(\sigma^{e-\ell}\left(\mathcal{B}_\iota\right)^{-1}\right)^T  \sigma^\ell\left(\mathcal{B}_\gamma\right)^{-1} =\left(\sigma^{\ell}\left(\mathcal{B}_\gamma\right) \mathcal{B}_\iota^T \right)^{-1}.$$ 
Since $\mathcal{C}_i^{(\iota) \perp_\ell}=\mathcal{F}$ for $ M_\iota +1 \le i \le N$ and $\mathcal{C}_j^{(\gamma)}=\mathcal{O}$ for $M_\gamma +1  \le j \le N$, the four conditions follow in a similar manner as in the proof of Theorem \ref{GalDualContMP}.
\end{proof}

The main result of this subsection is obtained by combining Lemmas \ref{lem3} and \ref{lem4}. It offers a sufficient condition for any MP code to be dual-containing.

\begin{theorem}
\label{GDualContGen}
Let $\mathcal{C}=\left[\mathcal{C}_1 \ \mathcal{C}_2 \ \cdots \ \mathcal{C}_M\right]\mathcal{A}$, where $\mathcal{C}_1 , \mathcal{C}_2 , \ldots , \mathcal{C}_M$ are linear codes of length $n$ over $\mathbb{F}_q$, $q=p^e$, and $\mathcal{A}\in \mathbb{F}_q^{M\times N}$. As in Theorem \ref{dualMPCase2}, write $\mathcal{C}$ as a sum $\sum_{t} \mathcal{C}^{(t)}$, where $\mathcal{C}^{(t)}$ is MP with a full row rank defining matrix $\mathcal{A}_t\in\mathbb{F}_q^{M_t \times N}$. For a fixed $0\le \ell <e$, suppose there exist invertible matrices $\mathcal{B}_\iota$ and $\mathcal{B}_\gamma$ for some pair $\mathcal{C}^{(\iota)}$ and $\mathcal{C}^{(\gamma)}$ with defining matrices $\mathcal{A}_\iota=\mathcal{B}_\iota \{1,\ldots,M_\iota\}$ and $\mathcal{A}_\gamma=\mathcal{B}_\gamma \{1,\ldots,M_\gamma\}$, respectively, such that
\begin{enumerate}
\item $\zeta_{i,j}=0$ for $i\ge M_\iota+1$ and $j\ge M_\gamma+1$,
\item if $\zeta_{i,j}\ne 0$ with $i\le M_\iota$ and $j\ge M_\gamma+1$, then $\mathcal{C}_i^{(\iota)}=\mathcal{F}$,
\item if $\zeta_{i,j}\ne 0$ with $i\ge M_\iota+1$ and $j\le M_\gamma$, then $\mathcal{C}_j^{(\gamma)}=\mathcal{F}$,
\item if $\zeta_{i,j}\ne 0$ with $i\le M_\iota$ and $j\le M_\gamma$, then $\mathcal{C}_i^{(\iota)\perp_\ell}  \subseteq \mathcal{C}_j^{(\gamma)}$,
\end{enumerate}
where $\left(\sigma^{\ell}\left(\mathcal{B}_\gamma\right)\mathcal{B}_\iota^T\right)^{-1}= \left[\zeta_{i,j}\right]$. Then $\mathcal{C}$ is $\ell$-Galois dual-containing.
\end{theorem}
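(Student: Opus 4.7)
The proof is essentially a direct synthesis of the two preceding lemmas, so the plan is short and mechanical. First I would invoke Theorem \ref{dualMPCase2} to decompose $\mathcal{C}$ as the sum $\sum_{t}\mathcal{C}^{(t)}$, where each $\mathcal{C}^{(t)}$ has a full row rank defining matrix $\mathcal{A}_t\in\mathbb{F}_q^{M_t\times N}$ obtained by partitioning the rows of $\mathcal{A}$ into linearly independent blocks. This places us in exactly the setting required by Lemmas \ref{lem3} and \ref{lem4}.

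Next I would focus on the distinguished pair $(\iota,\gamma)$ singled out in the hypothesis. The four conditions listed in the statement of Theorem \ref{GDualContGen}, stated in terms of the entries $\zeta_{i,j}$ of $\bigl(\sigma^{\ell}(\mathcal{B}_\gamma)\mathcal{B}_\iota^T\bigr)^{-1}$, are verbatim the four conditions of Lemma \ref{lem4} applied to the pair $\mathcal{C}^{(\iota)},\mathcal{C}^{(\gamma)}$ with the same auxiliary invertible extensions $\mathcal{B}_\iota$ and $\mathcal{B}_\gamma$ of $\mathcal{A}_\iota$ and $\mathcal{A}_\gamma$. Lemma \ref{lem4} therefore delivers the containment
\begin{equation*}
\mathcal{C}^{(\iota)\perp_\ell}\subseteq \mathcal{C}^{(\gamma)}.
\end{equation*}

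Finally, I would apply Lemma \ref{lem3}: the existence of a single pair $(\iota,\gamma)$ with $\mathcal{C}^{(\iota)\perp_\ell}\subseteq \mathcal{C}^{(\gamma)}$ suffices to force $\mathcal{C}^{\perp_\ell}\subseteq \mathcal{C}$, because
\begin{equation*}
\mathcal{C}^{\perp_\ell}=\bigcap_{t}\mathcal{C}^{(t)\perp_\ell}\subseteq \mathcal{C}^{(\iota)\perp_\ell}\subseteq \mathcal{C}^{(\gamma)}\subseteq \sum_{t}\mathcal{C}^{(t)}=\mathcal{C},
\end{equation*}
which is precisely the desired $\ell$-Galois dual-containment.

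There is no real obstacle: every substantive ingredient (the decomposition, the reduction of a containment between two MP codes to containments between constituents, and the one-pair sufficiency trick) is already packaged in Theorem \ref{dualMPCase2} and Lemmas \ref{lem3}, \ref{lem4}. The only point requiring care is a notational sanity check, namely that the matrix $\bigl(\sigma^{\ell}(\mathcal{B}_\gamma)\mathcal{B}_\iota^T\bigr)^{-1}$ appearing in the theorem hypothesis is exactly the matrix whose entries govern the four conditions in Lemma \ref{lem4}, so that no hidden transposition or index reshuffling creeps in.
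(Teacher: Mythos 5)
Your proposal matches the paper's own proof exactly: apply Lemma \ref{lem4} to the distinguished pair $(\iota,\gamma)$ to conclude $\mathcal{C}^{(\iota)\perp_\ell}\subseteq\mathcal{C}^{(\gamma)}$, then invoke Lemma \ref{lem3} to deduce $\ell$-Galois dual-containment of $\mathcal{C}$. The explicit inclusion chain you write out is precisely the content of Lemma \ref{lem3}, and your notational check confirming that $\bigl(\sigma^{\ell}(\mathcal{B}_\gamma)\mathcal{B}_\iota^T\bigr)^{-1}$ is the same matrix as in Lemma \ref{lem4} is correct.
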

\begin{proof}
Lemma \ref{lem4} asserts that $\mathcal{C}^{(\iota) \perp_\ell}\subseteq \mathcal{C}^{(\gamma)}$ if the preceding conditions are met. In turn, Lemma \ref{lem3} shows that $\mathcal{C}$ is $\ell$-Galois dual-containing.
\end{proof}

As a remedy for the sophistication of Theorem \ref{GDualContGen}, we focus on a special case. In this special case, we assume that the conditions of Theorem \ref{GDualContGen} are satisfied for an equal pair $\mathcal{C}^{(\iota)}=\mathcal{C}^{(\gamma)}$ that has an invertible defining matrix $\mathcal{A}_\iota=\mathcal{A}_\gamma$. This case can be stated as follows:

\begin{corollary}
\label{last_corr}
Let $\mathcal{C}=\left[\mathcal{C}_1 \ \mathcal{C}_2 \ \cdots \ \mathcal{C}_M\right]\mathcal{A}$, where $\mathcal{C}_1 , \mathcal{C}_2 , \ldots , \mathcal{C}_M$ are linear codes of length $n$ over $\mathbb{F}_q$, $q=p^e$, and $\mathcal{A}\in \mathbb{F}_q^{M\times N}$ has rank $N$. For a fixed $0\le \ell <e$, suppose there exist distinct integers $1\le r_1, r_2, \ldots, r_N \le M$ such that $\mathcal{A}_\gamma=\mathcal{A}\{r_1, r_2, \ldots, r_N\}$ is invertible and $\mathcal{C}_{r_i}^{\perp_\ell}  \subseteq \mathcal{C}_{r_j}$ for every nonzero $(i,j)^{\text{th}}$ entry of $\left(\sigma^{\ell}\left(\mathcal{A}_\gamma\right)\mathcal{A}_\gamma^T\right)^{-1}$. Then $\mathcal{C}$ is $\ell$-Galois dual-containing.
\end{corollary}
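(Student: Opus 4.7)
The plan is to apply Theorem \ref{GDualContGen} with the degenerate choice $\iota=\gamma$, taking the pair $(\mathcal{C}^{(\iota)},\mathcal{C}^{(\gamma)})$ to be a single MP code built from the $N$ rows indexed by $r_1,r_2,\ldots,r_N$. Since the conclusion of Theorem \ref{GDualContGen} only asks for the existence of one such pair, a symmetric choice will do.

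First, I would invoke Theorem \ref{dualMPCase2} and partition the row index set $\{1,2,\ldots,M\}$ of $\mathcal{A}$ so that $\{r_1,r_2,\ldots,r_N\}$ is one of the blocks, corresponding to the code $\mathcal{C}^{(\gamma)}=\left[\mathcal{C}_{r_1}\ \mathcal{C}_{r_2}\ \cdots\ \mathcal{C}_{r_N}\right]\mathcal{A}_\gamma$, while the remaining rows are grouped into blocks of linearly independent rows. Singletons of nonzero rows will suffice for this, since any zero row among them contributes only the trivial code and may be dropped from the sum. This yields a decomposition $\mathcal{C}=\mathcal{C}^{(\gamma)}+\sum_{t\ne\gamma}\mathcal{C}^{(t)}$ in which every defining matrix $\mathcal{A}_t$ has full row rank, as required by the setup of Theorem \ref{GDualContGen}.

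Next, because $\mathcal{A}_\gamma$ is already $N\times N$ and invertible by hypothesis, I would take $\mathcal{B}_\iota=\mathcal{B}_\gamma=\mathcal{A}_\gamma$, so that $M_\iota=M_\gamma=N$ and the matrix $\left(\sigma^{\ell}(\mathcal{B}_\gamma)\mathcal{B}_\iota^T\right)^{-1}$ coincides exactly with $\left(\sigma^{\ell}(\mathcal{A}_\gamma)\mathcal{A}_\gamma^T\right)^{-1}=[\zeta_{i,j}]$ appearing in the hypothesis. With this choice, conditions (1)--(3) of Theorem \ref{GDualContGen} quantify over indices $i\ge M_\iota+1$ or $j\ge M_\gamma+1$, which are out of range for an $N\times N$ matrix, and hence hold vacuously. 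Condition (4) then reduces to: for every $\zeta_{i,j}\ne 0$, $\mathcal{C}_i^{(\iota)\perp_\ell}\subseteq\mathcal{C}_j^{(\gamma)}$, which in our indexing reads $\mathcal{C}_{r_i}^{\perp_\ell}\subseteq\mathcal{C}_{r_j}$\textemdash precisely the hypothesis of the corollary. Theorem \ref{GDualContGen} therefore concludes that $\mathcal{C}$ is $\ell$-Galois dual-containing.

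The main obstacle is essentially bookkeeping: one must verify that the pair $(\iota,\gamma)$ in Theorem \ref{GDualContGen} may legitimately be taken to be equal, and that the invertibility of $\mathcal{A}_\gamma$ removes any need to extend it to a larger $\mathcal{B}_\gamma$, thereby collapsing conditions (1)--(3) to triviality. Once these points are in place, the corollary is a direct specialization of the preceding theorem.
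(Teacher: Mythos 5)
Your proof is correct and is essentially the same argument as the paper's, just routed through Theorem \ref{GDualContGen} instead of shortcutting to Corollary \ref{CorrDCGalois}. The paper writes $\mathcal{C}=\sum_t\mathcal{C}^{(t)}$ with $\mathcal{C}^{(\gamma)}$ as a summand, invokes Corollary \ref{CorrDCGalois} (the square-invertible case) to get $\mathcal{C}^{(\gamma)\perp_\ell}\subseteq\mathcal{C}^{(\gamma)}$, and then chains $\mathcal{C}\supseteq\mathcal{C}^{(\gamma)}\supseteq\mathcal{C}^{(\gamma)\perp_\ell}\supseteq\cap_t\mathcal{C}^{(t)\perp_\ell}=\mathcal{C}^{\perp_\ell}$; your specialization of Theorem \ref{GDualContGen} with $\iota=\gamma$ and $\mathcal{B}_\iota=\mathcal{B}_\gamma=\mathcal{A}_\gamma$ unpacks to exactly this, since conditions (1)--(3) vanish and condition (4) becomes the hypothesis.
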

\begin{proof}
As in Theorem \ref{dualMPCase2}, one can write $\mathcal{C}=\sum_{t} \mathcal{C}^{(t)}$ with $\mathcal{C}^{(\gamma)}=\left[\mathcal{C}_{r_1} \ \mathcal{C}_{r_2} \ \cdots \ \mathcal{C}_{r_N}\right]\mathcal{A}_\gamma$ appearing as a summand. By hypothesis and Corollary \ref{CorrDCGalois}, $\mathcal{C}^{(\gamma)}$ is $\ell$-Galois dual-containing. Then
\begin{equation*}
\mathcal{C}=\sum_t \mathcal{C}^{(t)}\supseteq \mathcal{C}^{(\gamma)}\supseteq \mathcal{C}^{(\gamma) \perp_\ell} \supseteq  \cap_t \mathcal{C}^{(t)\perp_\ell}= \mathcal{C}^{\perp_\ell}.
\end{equation*}
\end{proof}

The following example is a direct application of Corollary \ref{last_corr} when $\ell=0$. It demonstrates the existence of a Euclidean dual-containing MP code with a full column rank defining matrix that also has the best-known parameters.

\begin{example}
\label{Ex10}
Consider any MP code $\mathcal{C}=\left[\mathcal{C}_1 \ \mathcal{C}_2 \ \mathcal{C}_3 \ \mathcal{C}_4\right]\mathcal{A}$ over $\mathbb{F}_3$, where 
\begin{equation*}
\mathcal{A}=\begin{bmatrix}
 2 & 1 & 1\\
 2 & 0 & 2\\
 2 & 0 & 0\\
 0 & 0 & 1\\
\end{bmatrix}.
\end{equation*}
Let $\mathcal{A}_1=\mathcal{A}\{1,2,3\}$ and $\mathcal{A}_2=\mathcal{A}\{4\}$. Then, indeed, $\mathcal{C}=\mathcal{C}^{(1)}+\mathcal{C}^{(2)}$, where $\mathcal{C}^{(1)}=\left[\mathcal{C}_1 \ \mathcal{C}_2 \ \mathcal{C}_3\right]\mathcal{A}_1$ and $\mathcal{C}^{(2)}=\left[\mathcal{C}_4\right]\mathcal{A}_2$. Observe that
\begin{equation*}
\left(\mathcal{A}_1 \mathcal{A}_1^T\right)^{-1}=\begin{bmatrix}
 1&1&1\\
 1&2&0\\
 1&0&0
\end{bmatrix}.
\end{equation*}
By Corollary \ref{last_corr}, $\mathcal{C}$ is Euclidean dual-containing if $\mathcal{C}^{(1)}$ is Euclidean dual-containing. The latter follows by Corollary \ref{CorrDCGalois} if and only if $\mathcal{C}_2^{\perp} \subseteq \mathcal{C}_2$ and $\mathcal{C}_1^{\perp}$ is a subset of $\mathcal{C}_1$, $\mathcal{C}_2$, and $\mathcal{C}_3$. One possibility for the generators of constituents of length $6$ over $\mathbb{F}_3$ meeting these conditions are 
\begin{align*}
\mathbf{G}_1&=\begin{bmatrix}
 1&0&0&0&0&1\\
 0&1&0&0&0&2\\
 0&0&1&0&0&1\\
 0&0&0&1&0&2\\
 0&0&0&0&1&2
\end{bmatrix}, \quad
\mathbf{G}_2=\begin{bmatrix}
 1&0&0&0&0&2\\
 0&1&0&0&0&2\\
 0&0&1&0&0&2\\
 0&0& 0&1&0&2\\
 0&0&0&0&1&1       
\end{bmatrix},\\
\mathbf{G}_3&=\begin{bmatrix}
 1&2&1&2&2&2
\end{bmatrix}, \quad
\mathbf{G}_4=\begin{bmatrix}
 1&0&1&1&1&0
\end{bmatrix}.
\end{align*}
Such a choice offers a Euclidean dual-containing MP code with the best-known parameters $[18,12,4]$ over $\mathbb{F}_3$ according to \cite{Grassl:codetables}. 
\end{example}

\section{Conclusion}
\label{concl}
We provided formulas for the Galois dual of MP codes and investigated the necessary and sufficient conditions for such codes to be Galois self-orthogonal and Galois dual-containing. Our results outperform those in the literature because we examine MP codes in a more general form. Specifically, we make no restrictions on the constituent codes of the MP codes, other than that they are linear. Additionally, we did not assume a square invertible defining matrix; instead, we separately investigated the cases of a full row rank defining matrix and the case where the defining matrix has a smaller rank. Furthermore, we made no assumptions on the form of the product $\mathcal{A}\mathcal{A}^T$. As an application, we presented several numerical examples using MP codes with the best-known parameters.

%\section*{Declarations}
%The author has no conflicts of interest to declare. The author has no relevant financial or non-financial interests to disclose. No data were created or analyzed during this study.

%\bibliography{sn-bibliography}% common bib file
%% if required, the content of .bbl file can be included here once bbl is generated
%%\input sn-article.bbl

\end{document}